\definecolor{darkblue}{rgb}{0.0,0.0,0.39}
\newif\if@econjel
\def\@jelclass{}
\newif\if@econkeywords
\def\@keywords{}
\newif\if@econack
\def\@acknowledgments{}
\newif\if@econabstract
\newcommand{\acknowledgments}[1]{%
  \@econacktrue
  \gdef\@acknowledgments{#1}}
\newcommand{\keywords}[1]{%
  \@econkeywordstrue
  \gdef\@keywords{#1}
  \hypersetup{ pdfkeywords = {#1}}}
\newcommand{\jelclass}[1]{%
  \@econjeltrue
  \gdef\@jelclass{#1}
  \hypersetup{ pdfsubject = {JEL Classification: #1}}}
\renewcommand{\abstract}[1]{%
  \@econabstracttrue
  \gdef\@abstract{#1}}
\newif\if@econpdfauth
\newcommand{\pdfauthor}[1]{%
  \@econpdfauthtrue
  \def\@pdfauthor{#1}}
\newcommand{\orig@maketitle}{}%
\let\orig@maketitle\maketitle
\def\maketitle{%
  \hypersetup{ pdftitle={\@title} }%
  \if@econpdfauth
    \hypersetup{ pdfauthor={\@pdfauthor} }%
  \else
    \hypersetup{ pdfauthor={\@author} }%
  \fi
  \orig@maketitle
}
\def\@maketitle{%
  \newpage
  \begin{center}%
  \let \footnote \thanks
    \if@econack
      {\huge \@title\thanks{\@acknowledgments} \par}%
    \else%
      {\huge \@title \par}%
    \fi
    \vskip 2em%
    {\large
      \lineskip .5em%
      \begin{tabular}[t]{c}%
        \@author
      \end{tabular}\par}%
    {\@date}%
  \end{center}%
  \par
  \if@econabstract
  \vskip 1.5em
  \begin{center}
    \begin{minipage}{0.9\textwidth}
      \small
      {\noindent {\bfseries Abstract.}\hspace{0.5em} \@abstract}\par%
      \if@econkeywords
        \medskip%
        \noindent%
        \textbf{Keywords:} \@keywords.\par%
      \fi
      \if@econjel
        \medskip%
        \noindent%
        \textbf{JEL Classification:} \@jelclass.
      \fi
    \end{minipage}
  \end{center}
  \fi
  \vskip 0.5em}%
\long\def\@makecaption#1#2{%
  \vskip\abovecaptionskip
  \sbox\@tempboxa{\textsc{#1}. #2}%
  \ifdim \wd\@tempboxa >\hsize
    \textsc{#1}. #2\par
  \else
    \global \@minipagefalse
    \hb@xt@\hsize{\hfil\box\@tempboxa\hfil}%
  \fi
  \vskip\belowcaptionskip}
\renewcommand{\@seccntformat}[1]{\begingroup\csname the#1\endcsname. \endgroup}
\renewcommand\section{\@startsection {section}{1}{\z@}%
  {-3.5ex \@plus -1ex \@minus -.2ex}%
  {2.3ex \@plus.2ex}%
  {\normalbaselines\normalfont\large\bfseries\raggedright}}
\renewcommand\subsection{\@startsection{subsection}{2}{\z@}%
  {-3.25ex\@plus -1ex \@minus -.2ex}%
  {1.5ex \@plus .2ex}%
  {\normalbaselines\normalfont\normalsize\itshape\raggedright}}
\renewcommand\subsubsection{\@startsection{subsubsection}{3}{\z@}%
  {-3.25ex\@plus -1ex \@minus -.2ex}%
  {1.5ex \@plus .2ex}%
  {\normalbaselines \normalfont\normalsize\itshape\raggedright}}
\renewcommand\paragraph{\@startsection{paragraph}{4}{\z@}%
  {3.25ex \@plus 1ex \@minus.2ex}%
  {-1em}%
  {\normalbaselines\normalfont\small\itshape\raggedright}}
\renewcommand\subparagraph{\@startsection{subparagraph}{5}{\parindent}%
  {3.25ex \@plus 1ex \@minus .2ex}%
  {-1em}%
  {\normalbaselines\normalfont\small\itshape\raggedright}}
\renewcommand*{\backref}[1]{}
\theoremstyle{plain}
\newtheorem{theorem}{Theorem}
\newtheorem{lemma}{Lemma}
\theoremstyle{definition}
\newtheorem*{definition}{Definition}
\newtheorem{assumption}{Assumption}
\theoremstyle{remark}
\newtheorem*{remark}{Remark}
\DeclareMathOperator{\e}{\mathrm{e}}
\DeclareRobustCommand{\R}[0]{\mathbb{R}}
\DeclareRobustCommand{\Z}[0]{\mathbb{Z}}
\DeclareMathOperator{\E}{\mathrm{E}}
\DeclareMathOperator{\1}{\mathrm{1}}
\newcommand{\abs}[1]{\left\lvert#1\right\rvert}
\newcommand{\floor}[1]{\left\lfloor#1\right\rfloor}
\DeclareMathOperator{\diag}{diag}
\DeclareMathOperator{\vectorize}{vec}
\DeclareMathOperator{\Var}{\mathrm{Var}}
\DeclareMathOperator{\Cov}{\mathrm{Cov}}
\DeclareMathOperator{\var}{\Var}
\DeclareMathOperator{\cov}{\Cov}
\DeclareMathOperator{\Normal}{\mathrm{N}}
\DeclareMathOperator{\Exponential}{\mathrm{Expo}}
\theoremstyle{example}
\newtheorem*{ex:renewal}{Renewal Example}
\newtheorem*{ex:entry}{$\mathbf{2 \times 2}$ Entry Example}
\newtheorem*{ex:ladder}{Quality Ladder Example}
\def\Nplayers{N}
\def\proofapx{}
\begin{document}

\title{Identification and Estimation of Continuous-Time Dynamic Discrete Choice Games}

\author{\uppercase{Jason R. Blevins} \\ The Ohio State University}

\pdfauthor{Jason R. Blevins}

\keywords{Continuous time, Markov decision processes, dynamic discrete choice, dynamic games, identification}

\date{November 4, 2025}

\jelclass{%
  C13, 
  C35, 
  C62, 
  C73} 

\abstract{This paper considers the theoretical, computational, and econometric properties of continuous time dynamic discrete choice games with stochastically sequential moves, introduced by \citet*{abbe-2016}.
We consider identification of the rate of move arrivals, which was assumed to be known in previous work, as well as a generalized version with heterogeneous move arrival rates.
We re-establish conditions for existence of a Markov perfect equilibrium in the generalized model and consider identification of the model primitives with only discrete time data sampled at fixed intervals.
Three foundational example models are considered: a single agent renewal model, a dynamic entry and exit model, and a quality ladder model.
Through these examples we examine the computational and statistical properties of estimators via Monte Carlo experiments and an empirical example using data from \cite{rust87optimal}.
The experiments show how parameter estimates behave when moving from continuous time data to discrete time data of decreasing frequency and the computational feasibility as the number of firms grows.
The empirical example highlights the impact of allowing decision rates to vary.}

\acknowledgments{%
  Replication package available at \url{https://github.com/jrblevin/ctgames-qe}.
  I am grateful for useful comments from the editor and three referees, as well as discussions with seminar participants at
  Columbia University,
  Indiana University,
  Northwestern University,
  Stony Brook University,
  Texas A\&M University,
  the University of British Columbia,
  the University of Chicago,
  the University of Iowa,
  the University of Michigan,
  and the University of Montreal,
  and conference attendees
  at the 2013 Meeting of the Midwest Econometrics Group,
  the 2014 Meeting of the Midwest Economics Association,
  the 2014 University of Calgary Empirical Microeconomics Workshop,
  the 2015 Econometric Society World Congress
  and the 2019 International Association for Applied Econometrics conference.
  This work builds on earlier work with
  Peter Arcidiacono,
  Patrick Bayer,
  and
  Paul Ellickson
  and benefited tremendously from our many discussions together.
}

\maketitle
\newpage

\section{Introduction}
\label{sec:intro}

This paper studies continuous-time econometric models of dynamic
discrete choice games.
Work on continuous-time dynamic games by \cite{doraszelski-judd-2012},
\citet*{abbe-2016} (henceforth ABBE), and others was motivated by their
ability to allow researchers to compute and estimate more realistic,
large-scale games and to carry out complex counterfactual
policy experiments which were previously infeasible due to
computational limitations.

Given the practical and conceptual benefits of continuous-time models,
this paper considers identification and
estimation of the rate of move arrivals in the original ABBE model,
where it was assumed known.
We also consider identification in a generalized
model with additional heterogeneity through
firm- and state-specific move arrival rates.
We demonstrate these specifications via
three canonical models: a single agent renewal model, a dynamic
entry and exit model, and a quality ladder model.
We carry out an empirical illustration using the original data of
\cite{rust87optimal} to estimate a continuous-time model with
heterogeneous move arrival rates across states and compare with
the restricted form of the original ABBE model.
Based on the estimates, we conduct Monte Carlo experiments
to investigate the effects of estimating the model using
discrete-time data of varying frequencies.
Using the quality ladder model, we conduct
Monte Carlo experiments to examine the computational
feasibility as the number of firms grows.

For many economic models there is no natural, fixed time interval
at which agents make decisions.
Despite this, it is standard practice for applied researchers to
calibrate the decision interval in their empirical model to be
equal to the sampling interval of their data.
However, continuous-time modeling offers more flexibility
by allowing agents to make decisions asynchronously at
stochastic points in time.
This approach eliminates simultaneous moves and introduces
sequential decision-making, which better aligns with real-world
scenarios in many cases and leads to significant computational
advantages.

Another advantage of continuous-time modeling is the
reduction of multiple equilibria by eliminating simultaneous
moves.
While it does not completely eliminate multiplicity, this simplifies
estimation and the ability to conduct meaningful counterfactual
simulations.
Therefore, another benefit of allowing for heterogeneity in move
arrival rates is to reduce symmetry in the model and thereby remove
another source of multiplicity.

Modeling economic processes in continuous time dates back
several decades to work in time series econometrics by
\cite{phillips-1972, phillips-1973},
\cite{sims-1971},
\cite{geweke-1978},
and
\cite{geweke-marshall-zarkin-1986}
and work on longitudinal models by \cite{heckman-singer-1986}.
Despite this early work on continuous-time models, discrete-time
models became the de facto standard for dynamic discrete choice and
now have a long, successful history in structural applied microeconometrics
starting with the
work of \cite{gotz80estimation}, \cite{miller-1984}, \cite{pakes-1986},
\cite{rust87optimal}, and \cite{wolpin84estimable}.
A recent series of papers
\citep{aguirregabiria-mira-2007, bajari-benkard-levin-2007,
  pakes-ostrovsky-berry-2007, pesendorfer08asymptotic} have shown how
to extend two-step estimation techniques, originally developed by
\cite{hotz93conditional} and \cite{hotz94simulation}, to more
complex multi-agent settings.
The computation of multi-agent models remains formidable, despite a growing
number of methods for solving for equilibria \citep{pakes94computing,
pakes01stochastic, doraszelski-satterthwaite-2010}.

Dynamic decision problems are inherently complex and high-dimensional,
especially in strategic games where multiple players interact.
In discrete-time models, simultaneous actions introduce
further complexity, as one must calculate expectations over all
possible combinations of rivals' actions.
This exponentially increases the number of future states to
evaluate, making it infeasible to compute equilibrium in many economic
environments.
This has severely limited the scale and degree
of heterogeneity in applied work using these methods.

\cite{doraszelski-judd-2012} showed that continuous-time models
enjoy significant computational advantages.
By eliminating simultaneous moves, the model ensures that state
changes occur sequentially.
Expectations over rival actions grow linearly with the
number of players, rather than exponentially.
As a result, continuous-time models significantly reduce the computational
burden, allowing for faster and more scalable computation of equilibria.

ABBE demonstrated the empirical tractability of continuous-time games.
They developed an econometric model retaining the aforementioned
computational advantages while incorporating familiar features from
discrete-time discrete choice models.
They proposed a two-step conditional choice probability (CCP)
estimator for their model, connecting continuous-time games with
an established line of work on discrete-time dynamic games.
They showed it is feasible to estimate extremely large-scale
games and carry out counterfactuals
in those games, which would be computationally prohibitive in
a simultaneous-move, discrete-time model.
ABBE illustrated these advantages through an empirical application
analyzing Walmart's entry into the U.S. supermarket industry.

However, ABBE's identification results did not address identification of the
rate of move arrivals.
Here, we treat it as a structural parameter to identify and allow it to depend
on the model state and player identity, with certain restrictions.
These results are important empirically, to allow flexibility in the frequency of decisions
in the model as separate from the frequency of observations in the data.
As we show empirically using \cite{rust87optimal}'s original data, allowing
and estimating varying decision rates can avoid bias in other structural
parameters, such as costs, and improve our interpretation of results.

This paper builds on \cite{blevins-2017}, which addressed identification of
the reduced forms of continuous-time models using discrete-time data.
While that work considered first-order linear systems of stochastic
differential equations, we apply those results to a specific class of
finite-state Markov jump processes generated by our structural model.
We develop linear restrictions for our model that satisfy the conditions
of Theorem 1 of \cite{blevins-2017} to identify the continuous-time reduced
form of our model.
We also address the question of identifying the structural primitives of our model.

Similar continuous-time models have since been used in a growing number
of applications including
\cite{takahashi-2015} to movie theaters,
\cite{deng-mela-2018} to TV viewership and advertising,
\cite{nevskaya-albuquerque-2019} to online games,
\cite{agarwal-ashlagi-rees-somaini-waldinger-2021} to allocation of donor kidneys,
\cite{jeziorski-2022} to the U.S. radio industry,
\cite{schiraldi-smith-takahashi-2012-wp} to supermarkets in the U.K.,
\cite{lee-roberts-sweeting-2012-wp} to baseball tickets,
\cite{cosman-2014-wp} to bars in Chicago,
\cite{mazur-2017-wp} to the U.S. airline industry,
\cite{kim-2021-does} to the U.S. retail banking industry,
and
\cite{qin-vitorino-john-2022-wp} to airline networks in China.

The remainder of this paper is organized as follows.
In \autoref{sec:model}, we review a generalized version of the ABBE
model in which move arrival rates may vary by
player and state.
We establish a linear representation of the value function in terms of
CCPs as well as the existence of a Markov perfect equilibrium.
We then develop new identification results for the model in
\autoref{sec:identification}.
We use two canonical examples throughout the paper to illustrate our
results: a single agent renewal model based on \cite{rust87optimal}
and a $2 \times 2$ entry model similar to models used by
\cite{aguirregabiria-mira-2007}, \cite{pesendorfer08asymptotic},
and others.
Finally, in \autoref{sec:mc} we examine the computational and
econometric properties via a series
of Monte Carlo experiments.
\autoref{sec:conclusion} concludes.

\section{Continuous-Time Dynamic Discrete Choice Games}
\label{sec:model}

We consider infinite horizon discrete games in continuous time indexed by $t \in [0, \infty)$ with $\Nplayers$ players indexed by $i = 1, \dots, \Nplayers$.
We introduce a heterogeneous generalization of the ABBE model, where players may have different discount rates and move arrival rates across states.
After formalizing the structural model, we establish a linear representation of the value function in terms of CCPs and existence of a Markov perfect equilibrium.
We conclude the section with a comparison of discrete- and continuous-time models.

\subsection{State Space}

At any instant, the payoff-relevant market conditions can be summarized by a state vector $x \in \mathcal{X}$ with $K \equiv |\mathcal{X}| < \infty$.
Each state $x \in \mathcal{X}$ contains information about the market structure (e.g., which players are active, the quality of each player) and market conditions (e.g., demographic and geographic characteristics, input prices).

States $x$ are typically $L$-dimensional vectors in $\R^L$, with components that represent player-specific states and market characteristics.
For example, $x = (x_1, \dots, x_\Nplayers, d)$ where $x_i$ are player-specific states (e.g., incumbency status or number of stores) and $d$ is an exogenous market characteristic (e.g., population or demand level).

Because the state space is finite there is an equivalent \emph{encoded} state space representation $\mathcal{K} = \{ 1, \dots, K \}$.
Although $\mathcal{X}$ is the most natural way to interpret the state, using $\mathcal{K}$ allows us to easily vectorize payoffs, value functions, and other quantities.

\begin{ex:renewal}
\label{ex:rust}
Consider a continuous-time version of the single-agent renewal
model of \cite{rust87optimal}.
The manager of a municipal bus company faces a
dynamic decision about when to replace a bus engine.
Replacement incurs an immediate cost but resets the engine's
mileage, reducing maintenance costs and decreasing breakdown
likelihood.
A single discrete state variable $x \in \{ x_1, x_2, \dots, x_K \}$
represents the accumulated engine mileage.
In our empirical example, each $x_k$ represents a mileage
bin $[5000 \times (k-1), 5000 \times k)$, with $K = 90$ and a maximum
of 450,000 miles.
Without loss of generality, we can represent the mileage by an integer
$k \in \mathcal{K} = \{ 1, \dots, K \}$.
\end{ex:renewal}

\begin{ex:entry}
\label{ex:2x2x2}
As a second example, consider a simple model involving two firms $i \in \{ 1, 2 \}$
who sell the same good or service, each deciding whether to enter or exit a market.
Each firm has two actions $j \in \{ 0, 1 \}$.
The choice $j = 1$ is a switching action: enter the market if inactive, or exit the market if active,
while $j = 0$ represents a continuation choice to remain active or inactive, as the case may be.
Firms observe their own and each other's market activity status $x_{1k}$ and $x_{2k}$,
along with an exogenous demand state $d$ that can be high ($\text{H}$) or low ($\text{L}$).

The model is a two-firm entry game with a binary exogenous state variable.
The state vector $x_k$ contains $x_{1k}, x_{2k} \in \{ 0, 1 \}$ and $d_{k} \in \{ \text{L}, \text{H} \}$.
The state space is
\begin{equation*}
  \mathcal{X} = \{  (0, 0, \text{L}),  (1, 0, \text{L}),  (0, 1, \text{L}),  (1, 1, \text{L}),
                    (0, 0, \text{H}),  (1, 0, \text{H}),  (0, 1, \text{H}),  (1, 1, \text{H})  \}.
\end{equation*}
We can represent this in encoded form as
$\mathcal{K} = \{ 1, 2, 3, 4, 5, 6, 7, 8 \}$.
This representation will be more analytically convenient to characterize the model.
\end{ex:entry}

Although our two running examples are simple, to
better illustrate the ideas,
in \autoref{sec:examples:ladder} we introduce a third example:
a quality ladder model of oligopoly
dynamics with heterogeneous firms based on the model of
\cite{ericson95markov-perfect}.\footnote{As another example,
  \cite{ctnpl} specify a continuous-time version of the
  dynamic entry-exit model of \cite{aguirregabiria-mira-2007}.}

\subsection{Exogenous State Changes}

The state of the model can evolve over time in response to exogenous events, which we attribute to an artificial player referred to as ``nature,'' indexed by $i = 0$.
This player is responsible for state changes that cannot be attributed to the action of any other player $i > 0$ (e.g., changes in population or per capita income).
When the model is in state $k$, let $q_{kl}$ denote the hazard rate for transitions to another state $l \neq k$.
The rate $q_{kl}$ may be zero if direct transitions from $k$ to $l$ are not possible, or $q_{kl}$ may be some positive but finite value representing the hazard rate of such a transition.
Therefore, the overall rate at which the system leaves state $k$ for any other state $l \neq k$ is $\sum_{l \neq k}^K q_{kl}$.

\begin{ex:renewal}[continued]
Suppose the exogenous mileage transition process is characterized by a rate parameter $\gamma$ governing mileage increases to the next state.
This rate is constant across states for simplicity, so for all $l \neq k$ we have
\begin{equation*}
  q_{kl} = \begin{cases}
    \gamma & \text{if } l = k + 1, \\
    0 & \text{otherwise}.
  \end{cases}
\end{equation*}
\end{ex:renewal}

\begin{ex:entry}[continued]
  In the $2 \times 2$ entry model, there are two exogenous states: high demand ($d = \text{H}$) and low demand ($d = \text{L}$).
  Suppose nature switches from $\text{H}$ to $\text{L}$ at rate $\gamma_{\text{HL}}$ and back to $\text{H}$ at rate
  $\gamma_{\text{LH}}$.
  Thus, we have
  \begin{equation}
    q_{kl} = \begin{cases}
      \gamma_{\text{HL}} & \text{if } d_{k} = \text{H} \text{ and } d_{l} = \text{L}, \\
      \gamma_{\text{LH}} & \text{if } d_{k} = \text{L} \text{ and } d_{l} = \text{H}, \\
      0 & \text{otherwise}.
    \end{cases}
  \end{equation}
\end{ex:entry}

\subsection{Decisions \& Endogenous State Changes}

As in discrete time games, the players in our model can take actions that influence the evolution of the state vector.
Each player has $J$ actions represented by the choice set $\mathcal{J} = \{ 0, 1, 2, \dots, J-1 \}$.
When the model is in state $k$, the holding time until the next move by player $i$ is exponentially distributed with rate parameter $\lambda_{ik}$.
In other words, decision times for player $i$ in state $k$ occur according to a Poisson process with rate $\lambda_{ik}$.
We assume these processes are independent across players and the rates $\lambda_{ik}$ are finite for all $i$ and $k$, reflecting the fact that monitoring the state and making decisions is costly, making continuous monitoring infeasible.

In ABBE and previous applications of this framework, the rate of decisions was
assumed to be known by the researcher and to be constant across players and
states.
For example, $\lambda_{ik} = 1$ would correspond to a decision on average once
per time unit.
In this paper, we consider the rates $\lambda_{ik}$ to be structural parameters
to be estimated.
Additional identifying restrictions will be required and therefore the
specification of these rates will be important.

Let $h_{ijk}$ denote the rate at which player $i$ takes action $j$ in state $k$, with the overall decision rate in state $k$ satisfying $\sum_{j=0}^{J-1} h_{ijk} = \lambda_{ik}$.
The choice-specific hazards are determined endogenously in equilibrium as discussed in detail in the following sections.
When player $i$ chooses action $j$, the state jumps immediately and deterministically from $k$ to the continuation state denoted by $l(i,j,k)$.

The assumption of deterministic state changes easily accommodates decisions such as market entry, price adjustments, or construction of a new store, which are direct and certain.
Our framework can also accommodate stochastic outcomes if both the decision and outcome are discrete, observable, and encoded in the state vector.
The uncertainty of the outcome can be attributed to ``nature'' and the rates of state changes that result would be parameters of the exogenous state transition process discussed in the previous section.\footnote{Consider an example of R\&D investment with an uncertain success rate.
If the firm's R\&D investment is an observable choice and encoded in the state vector (say, $j \in \{ 0, 1 \}$ switches the firm's R\&D state $x_{i,\text{r}} \in \{ 0, 1 \}$) and if the success is observable (say, a new product is either developed or not, $x_{i,\text{p}} \in \{ 0, 1 \}$), then our model allows this by treating the new product development as an uncertain outcome determined by nature, following the R\&D investment, with an estimable rate of success.}

In most economic models, the actions of players only affect their individual components of the overall state vector.
For example, when a new firm enters a market it may change the firm-specific activity indicator for that firm but not the level of demand in the market.
As we will discuss in more detail below, this leads to sparsity of the continuous time model and helps with identification.

\begin{ex:renewal}[continued]
Since there is a single agent ($\Nplayers = 1$),
we drop the subscript $i$ from the notation for this example.
Suppose the manager decides whether to replace a bus engine ($j=1$) or continue without replacing ($j = 0$).
Hence, $\mathcal{J} = \{ 0, 1 \}$.
Continuation does not change the state, but upon replacement the state resets immediately to $k = 1$, therefore
  \begin{equation*}
    l(j,k) = \begin{cases}
      k & \text{if } j = 0, \\
      1 & \text{if } j = 1.
    \end{cases}
  \end{equation*}

The agent makes decisions in each state $k$ at times determined by an
exogenous Poisson process with rate parameter $\lambda_{k}$.
This process represents the distribution of times when the manager
considers whether to replace the engine of a bus in mileage state $k$.
In a simple model, we may assume the decision rate is constant across
states: $\lambda_{k} = \lambda$.
Alternatively, we could allow that the manager evaluates buses with
higher mileage more frequently than those with lower mileage:
\begin{equation*}
  \lambda_{k} = \begin{cases}
    \lambda_{\text{L}} & \text{if } k \leq \floor{\frac{K}{2}}, \\
    \lambda_{\text{H}} & \text{otherwise}.
  \end{cases}
\end{equation*}
In this case, $\lambda_{\text{L}}$ is the rate of evaluation of
a low-mileage bus (in the lower half of states) and
$\lambda_{\text{H}}$ is the rate at which a bus with higher
mileage is monitored.

Let $h_{1k}$ denote the reduced form hazard of engine replacement in state $k$.
The rate of replacement $h_{1k}$ plus the rate of
continuation $h_{0k}$ in each state $k$ must be such that
$h_{1k} + h_{0k} = \lambda_{k}$.
Before discussing how these choice-specific hazards are determined optimally,
we need to first formalize the transition dynamics of the
state vector and introduce the payoff functions of the players.
\end{ex:renewal}

\begin{remark}
It is important to note that the endogenous hazards of specific
actions $h_{jk}$ may vary across states regardless of whether there
is heterogeneity in move arrival rates $\lambda_k$.
In practice one could assume the overall rate of decisions is constant across states:
$\lambda_{\text{L}} = \lambda_{\text{H}} = \lambda$.
This would imply the rate of (unobservable) non-replacement is $h_{0k} = \lambda - h_{1k}$.
Even in this case, with a constant overall rate of decisions, the rates of replacement and
non-replacement are endogenous and vary across states.
This is similar to the case of discrete time models, where the sum of
CCPs is necessarily constant and equal to
one while the individual choice probabilities vary across states.
The continuous time model allows another degree of flexibility in that
the rate of move arrivals can be different from one.
Heterogeneity in $\lambda_{ik}$ allows for even more flexible structures.
\end{remark}

\begin{ex:entry}[continued]
  In the $2 \times 2$ entry model, each firm $i$ makes decisions about
  entering or exiting the market in each state $k$ at rates $\lambda_{ik}$.
  We may believe that firms are heterogeneous, monitoring the market at different rates, but at possibly the same rate across states: $\lambda_{ik} = \lambda_i$.
  Alternatively, one could specify a model where firms can monitor the market
  more (or less) closely when demand is high ($d = \text{H}$) than when
  demand is low ($d = \text{L}$):
  \begin{equation*}
    \lambda_{ik} = \begin{cases}
      \lambda_{\text{L}} & \text{if } d = \text{L}, \\
      \lambda_{\text{H}} & \text{otherwise}.
    \end{cases}
  \end{equation*}
  These are merely two examples.
  We consider a third possibility---a model where the move arrival
  rates depend on the endogenous decisions of the players---in
  \autoref{sec:examples:ladder}.
\end{ex:entry}

\subsection{Payoffs}

In the continuous-time setting, we distinguish between the flow
payoffs that a player receives while the model remains in state $k$,
denoted $u_{ik}$, and the instantaneous choice-specific payoffs from
making choice $j$ in state $k$ at a decision time $t$, denoted
$c_{ijk}(t)$.
The instantaneous payoffs are additively separable as
$c_{ijk}(t) = \psi_{ijk} + \varepsilon_{ijk}(t)$, where $\psi_{ijk}$
is the mean payoff and $\varepsilon_{ijk}(t)$ is a choice-specific
unobserved payoff.
Player $i$ observes the vector
$\varepsilon_{ik}(t) \equiv \left( \varepsilon_{ijk}(t), j = 0,\dots,J-1 \right)$
of choice-specific unobservables before choosing action $j$.
All players and the researcher observe the state $k$, but
only player $i$ observes $\varepsilon_{ik}(t)$.

\begin{remark}
Note that in discrete time models, because all actions and state changes resolve simultaneously, the period payoffs are written as functions of the state, the unobservables, \emph{and the actions of all players} (e.g., $u_i(a_1,\dots,a_\Nplayers,x_t,\varepsilon_{it})$).
In our continuous-time model, the payoffs resulting from competition in the product market accrue as flows $u_{ik}$ in a specific state $k$ while the choice-specific payoffs $c_{ijk}(t)$ accrue at the instant the decision is made.
\end{remark}

\begin{ex:renewal}[continued]
  In the renewal model the agent faces a dynamic, stochastic cost minimization problem where the flow utility $u_{ik}$ is the flow cost of operating a bus with mileage $k$.
  For example, if the cost of mileage is $\beta < 0$ then a parametric flow utility function could be
  $u_{ik} = \beta k$.
  No cost is paid to continue, but a cost $\mu < 0$ is paid to replace the engine:
  \begin{equation*}
    \psi_{ijk} = \begin{cases}
      0 & \text{if } j = 0, \\
      \mu & \text{if } j = 1.
    \end{cases}
  \end{equation*}
  Following any choice $j$, the agent also receives the iid shock $\varepsilon_{ijk}$ associated with that choice.
\end{ex:renewal}

\subsection{Assumptions}

Before turning to the equilibrium, we pause and collect our assumptions so far.

\begin{assumption}[Discrete States]
  \label{assp:dx}
  The state space is finite:
  $K \equiv \abs{\mathcal{X}} < \infty$.
\end{assumption}

\begin{assumption}[Discount Rates]
  \label{assp:rho}
  The discount rates $\rho_i \in (0,\infty), i = 1, \dots, \Nplayers$ are known.
\end{assumption}

\begin{assumption}[Move Arrival Times]
  \label{assp:rates}
  Move arrival times follow independent Poisson processes with rate parameters
  $\lambda_{ik}$ for each player $i = 1, \dots, \Nplayers$ and state $k = 1, \dots, K$ and
  $q_{kl}$ for exogenous state changes from each state $k$ to $l \neq k$ due to nature,
  with $0 \leq \lambda_{ik} < \infty$, $0 \leq q_{kl} < \infty$,
  and $\sum_{l \neq k} q_{kl} + \sum_m \lambda_{mk} > 0$.
\end{assumption}

\begin{assumption}[Bounded Payoffs]
  \label{assp:u}
  The flow payoffs and choice-specific payoffs satisfy $\abs{u_{ik}} < \infty$ and $\abs{\psi_{ijk}} < \infty$ for all $i = 1, \dots \Nplayers$, $j = 0, \dots, J-1$, and $k = 1, \dots, K$.
\end{assumption}

\begin{assumption}[Additive Separability]
  \label{assp:as}
  The instantaneous payoffs are additively separable as $c_{ijk}(t) = \psi_{ijk} + \varepsilon_{ijk}(t)$.
\end{assumption}

\begin{assumption}[Costless Continuation \& Distinct Actions]
  \label{assp:dn} 
  For all $i$ and $k$:
  \begin{enumerate}[(a)]
  \item $l(i,j,k) = k$ and $\psi_{ijk} = 0$ for $j = 0$,
  \item $l(i,j,k) \neq l(i,j',k)$ for all $j = 0, \dots, J-1$ and $j' \neq j$.
  \end{enumerate}
\end{assumption}

\begin{assumption}[Private Information]
  \label{assp:pi}
  The choice-specific shocks $\varepsilon_{ik}(t)$ are iid
  across players $i$, states $k$, and decision times $t$.
  The joint distribution $F_{ik}$ is known and is
  absolutely continuous with respect to Lebesgue measure,
  with finite first moments and support $\R^{J}$.
\end{assumption}

Assumptions~\ref{assp:dx}--\ref{assp:pi} are generalized counterparts of Assumptions 1--4 of ABBE that allow for player heterogeneity and state dependent rates.\footnote{Specifically, Assumption~\ref{assp:dx} is equivalent to Assumption 1 of ABBE, Assumptions~\ref{assp:rho} and \ref{assp:rates} generalize Assumptions 2(a) and 2(b--c) of ABBE, Assumption~\ref{assp:u} is equivalent to Assumptions 2(d--e) of ABBE, and Assumptions~\ref{assp:as}--\ref{assp:dn} are equivalent to Assumptions 3--4 of ABBE, and Assumption~\ref{assp:pi} generalizes Assumption 5 of ABBE.}
Assumptions~\ref{assp:dx}--\ref{assp:as} were discussed above.
Assumption~\ref{assp:dn} formalizes that $j = 0$ is a costless continuation action and that all choices are observationally distinct.
The first part of \autoref{assp:dn} requires that if an inaction decision which does not change the state, denoted $j = 0$, is included in the choice set, then the instantaneous payoff associated with that choice must be zero.\footnote{The role of the choice $j = 0$ is similar to the role of the ``outside good'' in models of demand. Because not all agents in the market are observed to purchase one of the goods in the model, their purchase is defined to be the outside good.}  This is an identifying assumption.
The second part of \autoref{assp:dn} requires actions to be meaningfully distinct in the ways they change the state.

Finally, we formalize a common distributional assumption used in applied work.
We will use this assumption in examples and results throughout the paper for its
tractability.  This assumption implies Assumption~\ref{assp:pi}.

\begin{assumption}[Type I Extreme Value Distribution]
  \label{assp:tiev}
  The choice-specific shocks $\varepsilon_{ik}(t)$ are iid
  across players $i$, choices $j$, states $k$, and decision times $t$ and are
  distributed according to the standard
  Type I extreme value distribution.
\end{assumption}

\subsection{Strategies and Best Responses}

A stationary Markov policy for player $i$ is a function $\delta_i: \mathcal{K} \times \R^{J} \to \mathcal{J}: (k, \varepsilon_{ik}) \mapsto \delta_i(k, \varepsilon_{ik})$ mapping each state $k$ and vector $\varepsilon_{ik}$ to an action.
Associated with each policy $\delta_i$ are CCPs
\begin{equation}
\label{ccps}
\Pr[ \delta_i(k, \varepsilon_{ik}) = j \mid k ].
\end{equation}
Since firm $i$'s payoffs depend on rival shocks $\varepsilon_{mjk}$ only through their choices, it is sufficient to consider beliefs in terms of CCPs.
Let $\varsigma_{im}$ denote player $i$'s beliefs about player $m$: a collection of $J \times K$ probabilities.
Let $\varsigma_{i} = (\varsigma_{i1}, \dots, \varsigma_{i,i-1}, \varsigma_{i,i+1}, \dots \varsigma_{i\Nplayers})$ denote player $i$'s beliefs about all other players.
Finally, let $V_{ik}(\varsigma_i)$ denote player $i$'s expected present value in state $k$ when behaving optimally while rivals follow strategies consistent with beliefs $\varsigma_i$.
The best response strategy for player $i$ is
\begin{equation}
  \label{eq:best_response}
  b_i(k, \varepsilon_{ik}, \varsigma_i) =
  \arg\max_{j \in \mathcal{J}} \left\{  \psi_{ijk} + \varepsilon_{ijk} + V_{i,l(i,j,k)}(\varsigma_i) \right\}.
\end{equation}
That is, at each decision time the best response function $b_i$ assigns the action that maximizes the agent's expected payoff.
The quantities on the right side are the instantaneous payoff $\psi_{ijk} + \varepsilon_{ijk}$ associated with choice $j$ plus the present discounted value of payoffs that occur in the continuation state $l(i,j,k)$ arising when player $i$ chooses action $j$ in state $k$.

\begin{remark}
  With discrete choices, the best response condition in \eqref{eq:best_response} amounts to a threshold-crossing model with an additively separable error term.
  Under Assumption~\ref{assp:tiev} the best response probabilities have a logistic functional form in terms of the value function:
  \begin{equation}
    \label{eq:ccp:logit}
    \Pr\left[ b_i(k, \varepsilon_{ik}, \varsigma_i) = j \mid k \right]
    = \frac{\exp\left(\psi_{ijk} + V_{i,l(i,j,k)}(\varsigma_{i})\right)}{\sum_{j' \in \mathcal{J}} \exp\left(\psi_{ij'k} + V_{i,l(i,j',k)}(\varsigma_{i})\right)}.
  \end{equation}
\end{remark}

\subsection{Value Function}

Given beliefs $\varsigma_i$ held by player $i$, we can define the value function (here, a $K$-vector) $V_i(\varsigma_i) = ( V_{i1}(\varsigma_i), \dots, V_{iK}(\varsigma_i) )^\top$ where the $k$-th element $V_{ik}(\varsigma_i)$ is the present discounted value of all future payoffs obtained when starting in some state $k$ and behaving optimally in future periods given beliefs $\varsigma_i$.
For a small time increment $\tau$, under Assumption~\ref{assp:rates} the probability of an event with rate $\lambda_{ik}$ occurring is $\lambda_{ik} \tau$.
Given the discount rate $\rho_i$, the discount factor for such increments is $1 / (1 + \rho_i \tau)$.
Thus, for small time increments $\tau$ the present discounted value of being in state $k$ is
\begin{multline*}
  V_{ik}(\varsigma_{i}) = \frac{1}{1 + \rho_i \tau}\left[
    u_{ik} \tau +
    \sum_{l \neq k} q_{kl} \tau V_{il}(\varsigma_{i}) +
    \sum_{m \neq i} \lambda_{mk} \tau \sum_{j=0}^{J-1} \varsigma_{imjk} V_{i,l(m,j,k)}(\varsigma_{i})
    \right. \\ \left. +
    \lambda_{ik} \tau \E \max_j \left\{ \psi_{ijk} + \varepsilon_{ijk} + V_{i,l(i,j,k)}(\varsigma_{i}) \right\} +
    \left(1 - \sum_{m=1}^\Nplayers \lambda_{mk} \tau - \sum_{l \neq k} q_{kl} \tau \right) V_{ik}(\varsigma_{i}) +
    o(\tau)
  \right].
\end{multline*}
The $o(\tau)$ term accounts for the probabilities of two or more Poisson events occurring during the small interval $\tau$, which are proportional to $\tau^2$ or smaller.
Such probabilities become negligible as $\tau$ approaches zero, and thus can be ignored in the limit.
Rearranging and letting $\tau \to 0$, we obtain the following recursive expression for $V_{ik}(\varsigma_{i})$:
\begin{multline}
  \label{eq:bellman}
  V_{ik}(\varsigma_{i})
  = \frac{1}{\rho_i + \sum_{l \neq k} q_{kl} + \sum_m \lambda_{mk}}
  \times \left[ u_{ik} +
    \sum_{l \neq k} q_{kl} V_{il}(\varsigma_{i}) + \right. \\ \left.
    \sum_{m \neq i} \lambda_{mk} \sum_{j=0}^{J-1} \varsigma_{imjk} V_{i,l(m,j,k)}(\varsigma_{i}) +
    \lambda_{ik} \E \max_j \{ \psi_{ijk} + \varepsilon_{ijk} + V_{i,l(i,j,k)} (\varsigma_{i})\}
  \right]
\end{multline}
The denominator contains the sum of the discount factor and the rates
of all events that might possibly change the state.
The numerator is composed of the flow payoff for being in state $k$,
the rate-weighted values associated with exogenous state changes,
the rate-weighted values associated with states that occur after moves by rival players,
and the expected current and future value obtained when a move arrival for player $i$
occurs in state $k$.
The expectation is taken with respect to the joint distribution of
$\varepsilon_{ik} = ( \varepsilon_{i0k}, \dots, \varepsilon_{i,J-1,k} )^\top$.

\begin{remark}
  Note that the $\E\max$ term in \eqref{eq:bellman} can be written in
  the usual ``log-sum-exp'' form when the errors satisfy
  Assumption~\ref{assp:tiev}:
  \begin{equation*}
    \E \max_j \{ \psi_{ijk} + \varepsilon_{ijk} + V_{i,l(i,j,k)}(\varsigma_i) \}
    = \ln \sum_j \exp\left(\psi_{ijk} + V_{i,l(i,j,k)}(\varsigma_i)\right).
  \end{equation*}
\end{remark}

\begin{ex:renewal}[continued]
In the renewal model, the value function can be expressed very simply as follows (where the $i$ subscript and beliefs have been omitted since $\Nplayers = 1$):
\begin{equation*}
  V_k = \frac{1}{\rho + \gamma + \lambda} \left( u_k + \gamma V_{k+1}
  + \lambda_k \E \max \left\{ \varepsilon_{0k} + V_k, \mu + \varepsilon_{1k} + V_1 \right\} \right).
\end{equation*}
\end{ex:renewal}

\begin{ex:entry}[continued]
  In the $2 \times 2$ entry model, the value function for
  player $1$ in state $k$, where $x_k = (x_{k1}, x_{k2}, d_{k}) \in \{ 0, 1 \} \times \{ 0, 1 \} \times \{ \text{L}, \text{H} \}$,
  can be expressed recursively as (omitting beliefs $\varsigma_1$ for brevity):
  \begin{multline*}
    V_{1k}
    = \frac{1}{\rho_1 + \1\{d_{k} = \text{L}\} \gamma_{\text{LH}} + \1\{d_{k} = \text{H}\} \gamma_{\text{HL}} + \lambda_{1k} + \lambda_{2k}} \\
    \times \left(
      u_{1k}
      + \1\{d_{k} = \text{L}\} \gamma_{\text{LH}} V_{1,l(0, \text{H}, k)}
      + \1\{d_{k} = \text{H}\} \gamma_{\text{HL}} V_{1,l(0, \text{L}, k)}
      + \lambda_{2k} \varsigma_{120k} V_{1k}
      \right.\\\left.
      + \lambda_{2k} \varsigma_{121k} V_{1,l(2, 1, k)}
      + \lambda_{1k} \E \max\left\{
        \varepsilon_{i0k} + V_{1k},
        \psi_{11k} + \varepsilon_{11k} + V_{1,l(1, 1, k)}
      \right\}
    \right),
  \end{multline*}
  where $l(0,\text{H},k)$ and $l(0,\text{L},k)$ are the continuation
  states when nature switches the level of demand to $\text{H}$ and
  $\text{L}$, respectively, when in state $k$.
  $\varsigma_{12jk}$ is firm 1's belief about firm 2 choosing $j$.
\end{ex:entry}

\subsection{Markov Perfect Equilibrium}

\begin{definition}
A \emph{Markov perfect equilibrium} is a collection of stationary Markov policy rules $\{ \delta_i^* \}_{i=1}^\Nplayers$ such that for each player $i$ and for all $(k, \varepsilon_{ik})$, $\delta_i^*(k, \varepsilon_{ik}) = b_i(k, \varepsilon_{ik}, \varsigma_i)$ and $\varsigma_{imjk} = \Pr\left[ \delta_m^*(k, \varepsilon_{mk}) = j \mid k \right]$ for all $m \neq i$.
\end{definition}

Following the literature, we focus on Markov perfect equilibria.
The definition requires that for each player $i$, $\delta_i^*$ is a best response in all states given the beliefs $\varsigma_i$ and that these beliefs are consistent with the strategies $\delta_m^*$ for each rival player $m$.

Following \cite{milgrom-weber-1985} and \cite{aguirregabiria-mira-2007},
we characterize Markov perfect equilibria in terms of
equilibrium CCPs
\begin{equation}
  \sigma_{ijk} = \Pr\left[ \delta_i^*(k, \varepsilon_{ik}) = j \mid k \right].
\end{equation}
Henceforth, we will denote equilibrium choice probabilities and corresponding beliefs by $\sigma_{ijk}$.
Thus, $\sigma = (\sigma_1, \dots, \sigma_\Nplayers)$ will denote a profile of equilibrium choice probabilities and $\sigma_{-i}$ will denote the collection of rival choice probabilities that constitute player $i$'s beliefs.

ABBE proved that such an equilibrium exists when players share common move arrival and discount rates and when the move arrival rates do not vary across states (i.e., $\lambda_{ik} = \lambda$ and $\rho_i = \rho$ for all $i$ and $k$).
The following theorem extends this to the present generalized model with heterogeneity.
The proof, and all others, appears in \autoref{sec:proofs}.

\begin{theorem}
  \label{thm:existence}
  If Assumptions~\ref{assp:dx}--\ref{assp:pi} hold,
  then a Markov perfect equilibrium exists.
\end{theorem}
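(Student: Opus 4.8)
The plan is to prove existence via a fixed-point argument on the space of CCP profiles, following the standard approach used by \citet{aguirregabiria-mira-2007} in discrete time and adapted by ABBE to continuous time. First I would define the set $\Sigma$ of admissible CCP profiles: since each $\sigma_{ik} = (\sigma_{i0k},\dots,\sigma_{i,J-1,k})$ lies in the unit simplex $\Delta^{J-1}$, the full profile lives in $\Sigma = \prod_{i,k} \Delta^{J-1}$, which is a nonempty, compact, convex subset of a finite-dimensional Euclidean space. The goal is to construct a continuous self-map $\Phi:\Sigma\to\Sigma$ whose fixed points are exactly the equilibrium CCP profiles, and then invoke Brouwer's fixed point theorem.

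The map $\Phi$ is constructed in two stages. Given a candidate profile $\sigma$ (interpreted as everyone's beliefs), for each player $i$ one solves for the value function $V_i(\sigma_{-i})$ from the linear system \eqref{eq:bellman}: holding the beliefs $\varsigma_i = \sigma_{-i}$ fixed and substituting the $\E\max$ term (which under Assumption~\ref{assp:pi} is a known function of $V_i$ via the social surplus / McFadden formula), equation \eqref{eq:bellman} defines $V_i$ as the solution of a contraction mapping on $\R^K$ — the denominator $\rho_i + \sum_{l\neq k} q_{kl} + \sum_m \lambda_{mk}$ strictly exceeds the sum of the rate coefficients multiplying the $V_{il}$ terms on the right (the difference being exactly $\rho_i + \lambda_{ik}$ times the weight on the $\E\max$ term, but one must be slightly careful because the $\E\max$ term itself depends on $V_i$), so a Blackwell-type or weighted-sup-norm argument gives a unique $V_i(\sigma_{-i})$ depending continuously on $\sigma_{-i}$. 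Then I would feed this value function into the best-response probability map: $\Phi_{ijk}(\sigma) = \Pr[b_i(k,\varepsilon_{ik},\sigma_{-i}) = j\mid k]$, which by Assumption~\ref{assp:pi} (absolute continuity of $F_{ik}$) is a well-defined continuous function of the $V_{i,l(i,j,k)}(\sigma_{-i})$, hence continuous in $\sigma$. A profile $\sigma^*$ is a fixed point of $\Phi$ iff it satisfies the best-response and belief-consistency conditions of the Markov perfect equilibrium definition, so Brouwer delivers the result.

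The main obstacle — and the reason this is not entirely routine given the generalization — is establishing that the value-function step is well-posed under heterogeneous and possibly \emph{zero} rates. In ABBE the common rate $\lambda>0$ across all players and states made the denominator uniformly bounded below and the contraction modulus uniformly bounded away from one. Here Assumption~\ref{assp:rates} allows $\lambda_{ik}=0$ or $q_{kl}=0$ for individual $(i,k)$ or $(k,l)$, requiring only that $\sum_{l\neq k} q_{kl} + \sum_m \lambda_{mk} > 0$ in each state. The worry is a state $k$ in which player $i$ never moves ($\lambda_{ik}=0$) and no exogenous transitions occur ($q_{kl}=0$ for all $l$) but some rival $m$ moves; one must check the denominator is still positive (it is: $\rho_i>0$ by Assumption~\ref{assp:rho}, so $\rho_i + \sum_{l\neq k}q_{kl} + \sum_m\lambda_{mk} \geq \rho_i > 0$) and that the linear operator defining $V_i$ remains a contraction in an appropriate norm. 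The cleanest route is to observe that the off-diagonal coefficients are nonnegative and, because $\rho_i>0$, each row of the effective transition-rate operator has coefficients summing to strictly less than one; a standard argument (e.g.\ via the spectral radius of a substochastic matrix, or a weighted sup-norm chosen using $\rho_i$) then gives existence, uniqueness, and continuity of $V_i(\sigma_{-i})$ jointly in $\sigma_{-i}$. Once that is in hand, continuity of $\Phi$ and compactness/convexity of $\Sigma$ are immediate, and Brouwer finishes the proof.
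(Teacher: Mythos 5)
Your proposal is correct and follows essentially the same route as the paper: a Brouwer fixed-point argument on the compact, convex space of CCP profiles, with the map sending beliefs to best-response probabilities via the value function, whose continuity is guaranteed by the absolute continuity of $F_{ik}$. The extra care you take with well-posedness of the value-function step under heterogeneous (possibly zero) rates — noting that $\rho_i>0$ keeps the effective discounting strict — is exactly the right point and is handled in the paper via the strict diagonal dominance of $\Xi_i$ in the proof of the linear-representation theorem.
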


\subsection{Linear Representation of the Value Function}

Before proceeding, we revisit one of the central results of
ABBE (Proposition 2), a continuous-time analog of
\citet[][Proposition 1]{hotz93conditional} for discrete-time models.
Restated below as Lemma~\ref{lem:ccp:inversion}, ABBE showed that
differences in choice-specific value functions---that is,
$[\psi_{ijk} + V_{i,l(i,j,k)}(\sigma)] - [\psi_{ij'k} + V_{i,l(i,j',k)}(\sigma)]$
for two choices $j$ and $j'$---are identified directly
as functions of the CCPs $\sigma_i$.

\begin{lemma}[ABBE, 2016, Proposition 2]
  \label{lem:ccp:inversion}
  Under Assumptions~\ref{assp:dx}--\ref{assp:pi}, for each player $i$,
  state $k$, and choice $j$
  the choice-specific value function is identified, up to differences with respect
  to some baseline choice $j'$, as a function of the
  CCPs. Specifically, there exists a known function $\Phi$ such that:
  \begin{equation}
    \label{eq:csvfdiff}
    \psi_{ijk} + V_{i,l(i,j,k)}(\sigma) = \psi_{ij'k} + V_{i,l(i,j',k)}(\sigma) + \Phi(j,j',\sigma_{ik}).
  \end{equation}
\end{lemma}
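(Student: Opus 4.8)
The plan is to reduce the claim to the inversion result of \citet{hotz93conditional} for static additive random utility models. Fix a player $i$ and a state $k$, and abbreviate the choice-specific value of action $j$ by $v_{ijk} \equiv \psi_{ijk} + V_{i,l(i,j,k)}(\sigma)$. The first step is to observe that, once the equilibrium beliefs $\sigma$ (and hence the equilibrium value function $V_i(\sigma)$) are fixed, the decision taken by player $i$ at a move arrival in state $k$ is exactly the static discrete choice $\arg\max_{j \in \mathcal{J}} \{ v_{ijk} + \varepsilon_{ijk} \}$ of the best response \eqref{eq:best_response}: the continuation values enter additively and are just constants at this point. Hence the equilibrium CCPs satisfy
\begin{equation*}
  \sigma_{ijk} = \Pr\left[ v_{ijk} + \varepsilon_{ijk} \geq v_{ij'k} + \varepsilon_{ij'k} \ \text{ for all } j' \in \mathcal{J} \right],
\end{equation*}
where the probability is taken over $\varepsilon_{ik} \sim F_{ik}$.

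The second step is the inversion. The right-hand side above is the choice-probability map of a static ARUM with systematic utilities $(v_{i0k}, \dots, v_{i,J-1,k})$: it depends on these utilities only through the $J-1$ differences $v_{ijk} - v_{ij'k}$ relative to any fixed baseline $j'$, and, since $F_{ik}$ is absolutely continuous with support $\R^{J}$ (Assumption~\ref{assp:pi}), it maps $\R^{J-1}$ bijectively onto the interior of the unit simplex. By \citet[][Proposition 1]{hotz93conditional} the inverse of this map is a known function of $\sigma_{ik}$, determined entirely by the error distribution; write it as $\Phi(j, j', \cdot)$, so that $v_{ijk} - v_{ij'k} = \Phi(j, j', \sigma_{ik})$. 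Substituting the definition of $v_{ijk}$ and rearranging gives \eqref{eq:csvfdiff}. Under Assumption~\ref{assp:tiev} the map is the logit map $\sigma_{ijk} = \exp(v_{ijk}) / \sum_{j'} \exp(v_{ij'k})$, whose inverse is $\Phi(j, j', \sigma_{ik}) = \ln \sigma_{ijk} - \ln \sigma_{ij'k}$, consistent with \eqref{eq:ccp:logit}.

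The only substantive point is verifying that passing from discrete to continuous time leaves the static ARUM structure at a move arrival intact---namely that the relevant continuation values $V_{i,l(i,j,k)}(\sigma)$ appear additively and are common knowledge once $\sigma$ is fixed, so the conditional choice really is a static problem---after which invertibility is exactly the Hotz--Miller argument. The full-support and absolute-continuity parts of Assumption~\ref{assp:pi} are what ensure every $\sigma_{ijk}$ is strictly positive, so that $\sigma_{ik}$ lies in the open simplex and $\Phi$ is well defined there; on the boundary the inverse would fail to exist.
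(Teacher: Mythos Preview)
The paper does not supply its own proof of this lemma: it is stated as Proposition~2 of ABBE (2016) and invoked as an established result. Your argument is correct and is precisely the standard route---recognize that at a move arrival the best response \eqref{eq:best_response} is a static additive random utility problem with deterministic indices $v_{ijk} = \psi_{ijk} + V_{i,l(i,j,k)}(\sigma)$, then apply the Hotz--Miller inversion under the regularity in Assumption~\ref{assp:pi}. This is essentially how ABBE prove it as well, so there is no substantive difference to report.
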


This result will prove useful for vectorizing the value function.
Let $\Sigma_{m}(\sigma_{m})$ denote the transition matrix implied by
the CCPs $\sigma_{m}$ and the continuation
state function $l(m,\cdot,\cdot)$.
That is, the $(k,l)$ element of the matrix $\Sigma_{m}(\sigma_{m})$ is the
probability of transitioning from state $k$ to state $l$ as a result of
an action by player $m$ given the CCPs $\sigma_{m}$.
Let $Q_0 = (q_{kl})$ denote the intensity matrix for exogenous state
transitions and let
$\tilde{Q}_0 = Q_0 - \diag(q_{11}, \dots, q_{KK})$
be the matrix formed by taking $Q_0$ and setting the diagonal elements
to zero.

With this notation and Lemma~\ref{lem:ccp:inversion} in hand,
following \eqref{eq:bellman}, for given equilibrium CCPs
$\sigma$ we define the operator $\Gamma_i^{\sigma}$ as
\begin{equation}
  \label{eq:bellman:matrix}
  \Gamma_i^{\sigma}(V_i) = D_i \left[ u_i + \tilde{Q}_0 V_i + \sum_{m \neq i} L_m \Sigma_m(\sigma_m) V_i
  + L_i \left\{ \Sigma_i(\sigma_i) V_i + C_i(\sigma_i) \right\}\right],
\end{equation}
where $D_i$ is the $K \times K$ diagonal matrix with elements
$(D_i)_{kk} = 1/(\rho_i + \sum_{l \neq k} q_{kl} + \sum_{m=1}^\Nplayers \lambda_{mk})$,
$L_m = \diag(\lambda_{m1}, \dots, \lambda_{mK})$ is a diagonal matrix
containing the move arrival rates for player $m$,
$C_i(\sigma_{i})$ is the $K \times 1$ vector containing
the ex-ante expected value of the instantaneous payoff
$c_{ijk} = \psi_{ijk} + \varepsilon_{ijk}$ for player $i$ in each state $k$
given the best response probabilities $\sigma_i$.
That is, $k$-th element of $C_i(\sigma_i)$ is
$\sum_{j=0}^{J-1} \sigma_{ijk} \left[ \psi_{ijk} + e_{ijk}(\sigma_i) \right]$,
where $e_{ijk}(\sigma_i)$ is the expected value of $\varepsilon_{ijk}$
given that action $j$ is chosen:
\begin{equation*}
  e_{ijk}(\sigma_i) \equiv
  \frac{1}{\sigma_{ijk}} \int \varepsilon_{ijk}
  \cdot \1\left\{ \varepsilon_{ij'k} - \varepsilon_{ijk}
    \leq \psi_{ijk} - \psi_{ij'k} + V_{i,l(i,j,k)}(\sigma) - V_{i,l(i,j',k)}(\sigma) \;
    \forall j' \right\}\,
  dF_{ik}(\varepsilon_{ik}).
\end{equation*}
By Lemma~\ref{lem:ccp:inversion}, the choice-specific value differences on the right-hand
side are in turn functions of player $i$'s CCPs $\sigma_i$.
Hence, holding fixed the equilibrium beliefs $\sigma$, the corresponding
value function is a fixed point of
$\Gamma_i^{\sigma}$: $V_i = \Gamma_i^{\sigma}(V_i)$.

\begin{remark}
  Although $e_{ijk}(\sigma_i)$ involves a multivariate integral,
  \cite{aguirregabiria-mira-2002,aguirregabiria-mira-2007}
  established closed forms in terms of choice probabilities in two leading cases.
  For the case of Assumption~\ref{assp:tiev}, we have
  $e_{ijk}(\sigma_i) = \gamma_{\text{EM}} - \ln \sigma_{ijk}$,
  where $\gamma_{\text{EM}}$ is the Euler-Mascheroni constant ($\gamma_{\text{EM}} \approx 0.5772$).
  For $J = 2$ choices and $\varepsilon_{ik} \sim \Normal(0,\Omega)$,\footnote{See
  \cite{aguirregabiria-mira-2007} equation 13 and footnote 7 for details.}
  \begin{equation*}
    e_{ijk}(\sigma_i) = \frac{
        \var(\varepsilon_{ijk}) - \cov(\varepsilon_{i0k}, \varepsilon_{i1k})
      }{
        \sqrt{\var(\varepsilon_{i1k} - \varepsilon_{i0k})}
      }
      \frac{\phi\left(\Phi^{-1}(\sigma_{ijk})\right)}{\sigma_{ijk}},
  \end{equation*}
  where $\Phi$ and $\phi$ denote, respectively, the standard normal cdf and pdf.
\end{remark}


Collecting terms involving $V_i$ in \eqref{eq:bellman:matrix} and solving
leads to a linear representation of the value function in terms of
CCPs, rate parameters, and payoffs as
formalized in the following Theorem.
This representation generalizes Proposition 6 of ABBE and forms the
basis of the identification result later in Section~\ref{sec:identification:u}.
It is analogous to a similar result for discrete time games by
\citet[][eq. 6]{pesendorfer08asymptotic}.

\begin{theorem}
  \label{thm:v:linear}
  If Assumptions~\ref{assp:dx}--\ref{assp:pi} hold, then for a given
  collection of equilibrium choice probabilities $\sigma$, $V_i$ has the
  following linear representation for each $i$:
  \begin{align}
    V_i(\sigma) &= \Xi_i^{-1}(\sigma) \left[ u_i + L_i C_i(\sigma_i) \right] \qquad \text{where}     \label{eq:v:linear} \\
    \Xi_i(\sigma) &= \rho_i I_K + \sum_{m=1}^\Nplayers L_m [I_K - \Sigma_m(\sigma_m)] - Q_0    \label{eq:Xi_i}
  \end{align}
  is a nonsingular $K \times K$ matrix and $I_K$ is the $K \times K$ identity matrix.
\end{theorem}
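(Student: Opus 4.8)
The plan is to start from the fixed-point characterization $V_i = \Gamma_i^{\sigma}(V_i)$ established just before the theorem statement, write it out using the definition of $\Gamma_i^{\sigma}$ in \eqref{eq:bellman:matrix}, and collect all terms involving $V_i$ on the left-hand side. Explicitly, $V_i = D_i[u_i + \tilde{Q}_0 V_i + \sum_{m \neq i} L_m \Sigma_m(\sigma_m) V_i + L_i \Sigma_i(\sigma_i) V_i + L_i C_i(\sigma_i)]$. Multiplying through by $D_i^{-1}$ (which is legitimate since each diagonal entry $\rho_i + \sum_{l \neq k} q_{kl} + \sum_m \lambda_{mk}$ is strictly positive by Assumption~\ref{assp:rho} together with $\rho_i > 0$), I get $D_i^{-1} V_i = u_i + \tilde{Q}_0 V_i + \sum_{m=1}^{\Nplayers} L_m \Sigma_m(\sigma_m) V_i + L_i C_i(\sigma_i)$. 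Since $D_i^{-1} = \rho_i I_K + \diag(\sum_{l\neq k} q_{kl})_k + \sum_m L_m$, moving the $V_i$ terms over and recognizing that $-\diag(\sum_{l\neq k} q_{kl})_k + \tilde{Q}_0 = \tilde{Q}_0 - (\text{diagonal of row sums}) $; I need to check this reassembles into $-Q_0$. Indeed $Q_0$ has off-diagonal entries $q_{kl}$ and diagonal entries $q_{kk}$, and $\tilde Q_0 = Q_0 - \diag(q_{11},\dots,q_{KK})$, so the combination of the diagonal piece from $D_i^{-1}$ and $\tilde Q_0$ needs a brief reconciliation of how $q_{kk}$ is defined; I expect the intensity-matrix convention $q_{kk} = -\sum_{l\neq k} q_{kl}$ to make $-\diag(\sum_{l\neq k}q_{kl})_k + \tilde Q_0 = Q_0$ with a sign, yielding the $-Q_0$ in \eqref{eq:Xi_i}. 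After this bookkeeping, the equation becomes $\Xi_i(\sigma) V_i = u_i + L_i C_i(\sigma_i)$ with $\Xi_i(\sigma)$ exactly as in \eqref{eq:Xi_i}.

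The second and more substantive part is showing $\Xi_i(\sigma)$ is nonsingular, so that \eqref{eq:v:linear} follows by inversion. My approach is to exhibit $\Xi_i(\sigma)$ as a strictly diagonally dominant matrix, or more precisely to argue it has the structure $\rho_i I_K - A$ where $A$ is a subgenerator (an intensity matrix of a transient/killed Markov jump process). Write $A = -\sum_{m=1}^{\Nplayers} L_m[I_K - \Sigma_m(\sigma_m)] + \tilde Q_0 - (\text{diagonal of } q\text{-row-sums})$, i.e., the generator of the continuous-time Markov chain on $\mathcal{K}$ that combines exogenous transitions (rates $q_{kl}$) and endogenous transitions (rates $\lambda_{mk}\sigma_{mjk}$ to state $l(m,j,k)$). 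Each $\Sigma_m(\sigma_m)$ is a stochastic matrix (its rows are the CCP-weighted point masses on continuation states, which sum to one), so each row of $L_m[I_K - \Sigma_m(\sigma_m)]$ has zero sum, and likewise $Q_0$ has zero row sums; hence $A$ is a bona fide conservative generator and $\Xi_i(\sigma) = \rho_i I_K - A$. The key fact is then: for a generator $A$ and any $\rho_i > 0$, the matrix $\rho_i I_K - A$ is invertible, because its eigenvalues are of the form $\rho_i - \mu$ where $\mu$ ranges over eigenvalues of $A$, all of which have nonpositive real part (Gershgorin applied to $A$, whose diagonal entries are $\le 0$ and whose off-diagonal row sums equal the absolute value of the diagonal, placing every eigenvalue in the closed left half-plane); so $\rho_i - \mu$ never vanishes when $\rho_i > 0$. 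Equivalently, $(\rho_i I_K - A)^{-1} = \int_0^\infty e^{-\rho_i s} e^{As}\,ds$ converges since $\|e^{As}\|_\infty \le 1$ for the sub-stochastic semigroup.

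I expect the main obstacle to be purely a matter of careful sign- and convention-tracking in the first part: reconciling the diagonal corrections coming from $D_i^{-1}$, from $\tilde Q_0$ versus $Q_0$, and from the $I_K - \Sigma_m(\sigma_m)$ terms so that everything telescopes cleanly into the compact form \eqref{eq:Xi_i}. The nonsingularity argument is conceptually routine once $\Xi_i(\sigma)$ is recognized as a resolvent-type matrix $\rho_i I_K - A$ with $A$ a generator and $\rho_i > 0$; I would present it via strict diagonal dominance of $\Xi_i(\sigma)$ (each diagonal entry is $\rho_i + \sum_{l\neq k} q_{kl} + \sum_m \lambda_{mk}(1 - \sigma_{m,j(k),k}\text{-type terms})$ and the off-diagonal absolute row sum is strictly smaller by exactly $\rho_i$) and invoke the Levy–Desplanques theorem, which is the shortest self-contained route and avoids invoking semigroup theory.
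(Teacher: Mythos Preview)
Your proposal is correct and follows essentially the same route as the paper: start from the fixed-point characterization $V_i = \Gamma_i^{\sigma}(V_i)$ in \eqref{eq:bellman:matrix}, collect the $V_i$ terms to obtain $\Xi_i(\sigma)V_i = u_i + L_i C_i(\sigma_i)$, and then establish nonsingularity of $\Xi_i(\sigma)$ by strict diagonal dominance and the Levy--Desplanques theorem. Your resolvent/Gershgorin alternative is a nice additional perspective, but since you indicate you would ultimately present the diagonal-dominance argument, your chosen approach coincides with the paper's.
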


\proofapx{}

\subsection{Continuous Time Markov Jump Processes Representation}

The model's reduced form is a finite state Markov jump process, a stochastic process $X(t)$ indexed by $t \in [0, \infty)$ taking values in a finite state space $\mathcal{X} = \lbrace 1, \dots, K \rbrace$.
If we observe this process at time $t$ and state $X(t)$, it remains in this state for a random duration $\tau$ before transitioning to another state $X(t + \tau)$.
The duration $\tau$ is the holding time.
A trajectory of this process is a piecewise-constant, right-continuous function of time.
Jumps occur according to a Poisson process and holding times between jumps are exponentially distributed.
For fundamental properties of Markov jump processes, see \citet[Section 4.8]{karlin75first} or \citet[part II]{chung-1967}.

A finite Markov jump process can be summarized by its \emph{intensity matrix} or \emph{infinitesimal generator matrix}.
Consider the intensity matrix for nature, $Q_0 = (q_{kl})$
where for $k \neq l$
$q_{kl} = \lim_{h \to 0} \frac{\Pr\left[X(t+h) = l \mid X(t) = k \right]}{h}$
is the probability per unit of time that the system transitions from state $k$ to $l$ and the diagonal elements are $q_{kk} = -\sum_{l \neq k} q_{kl}$ so that the row sums equal zero.
Holding times before leaving state $k$ follow an exponential distribution with rate parameter $-q_{kk}$.
Conditional on leaving state $k$, the system transitions to state $l \neq k$ with probability
$q_{kl} / \sum_{l \neq k} q_{kl} = -q_{kl} / q_{kk}$.

For discrete time data, the exact times when actions and
state changes occur are unobserved.
With equispaced data (e.g., annual or quarterly) only the states at
the beginning and end of each period of length $\Delta$ are observed.
Although we cannot know the exact sequence of actions and state changes,
the model allows us to determine the likelihood of any transition
occurring over a time interval of length $\Delta$ using the
\emph{transition matrix}, denoted $P(\Delta)$.

Let $P_{kl}(\Delta) = \Pr\left[X(t+\Delta) = l \mid X(t) = k\right]$ denote the probability that the system is in state $l$ after a period of length $\Delta$ given that it was initially in state $k$.
The transition matrix $P(\Delta) = (P_{kl}(\Delta))$ is the corresponding $K \times K$ matrix of these probabilities and is given by the matrix exponential of the intensity matrix $Q$ scaled by the time interval $\Delta$:
\begin{equation}
  \label{eq:interval_transition_probabilities}
  P(\Delta) = \exp(\Delta Q) = \sum_{j=0}^\infty \frac{(\Delta Q)^j}{j!}.
\end{equation}
This is the matrix analog of the scalar exponential $\exp(x)$ for $x \in \R$.\footnote{Although we cannot calculate the infinite sum \eqref{eq:interval_transition_probabilities} exactly, we can compute $\exp(\Delta Q)$ numerically using known algorithms implemented in the Fortran package Expokit \citep{sidje98expokit} or the \texttt{expm} command in Matlab. See \cite{sherlock21direct} for recent discussion of the uniformization method.}

In the dynamic games we consider, the state space dynamics can be fully characterized by $\Nplayers + 1$ competing Markov jump processes with intensity matrices $Q_0, Q_1, \dots, Q_\Nplayers$ for nature and each player.
The \emph{aggregate intensity matrix} is defined as $Q \equiv Q_0 + Q_1 + \dots + Q_\Nplayers$.

\begin{ex:renewal}[continued]
Consider the $Q$ matrix implied by the continuous-time single-agent renewal model.
The state variable in the model is the total accumulated mileage of a bus engine, $\mathcal{K} = \{ 1, \dots, K \}$.
The exogenous state transition process is characterized by a $K \times K$ intensity matrix $Q_0$ on $\mathcal{K}$ with one parameter, $\gamma$, governing the rate of mileage increases:
\begin{equation*}
  Q_0= \begin{bmatrix}
    -\gamma & \gamma & 0 & 0 & \cdots & 0 \\
    0 & -\gamma & \gamma & 0 & \cdots & 0 \\
    \vdots & \vdots & \ddots & \vdots & \vdots & \vdots \\
    0 & 0 & \ldots & -\gamma & \gamma & 0 \\
    0 & 0 & \ldots & 0 & -\gamma & \gamma \\
    0 & 0 & \ldots & 0 & 0 & 0
  \end{bmatrix}.
\end{equation*}
The intensity matrix for state changes induced by the agent is
\begin{equation*}
  Q_1 = \begin{bmatrix}
    0 & 0 & 0 & \cdots & 0 & 0 \\
    \lambda_{\text{L}} \sigma_{12} & -\lambda_{\text{L}} \sigma_{12} & 0 & \cdots & 0 & 0 \\
    \lambda_{\text{L}} \sigma_{13} & 0 & -\lambda_{\text{L}} \sigma_{13} & \cdots & 0 & 0 \\
    \vdots & \vdots & \vdots & \ddots & \vdots & \vdots \\
    \lambda_{\text{H}} \sigma_{1,K-1} & 0 & 0 & \cdots & -\lambda_{\text{H}} \sigma_{1,K-1} & 0 \\
    \lambda_{\text{H}} \sigma_{1K} & 0 & 0 & \cdots & 0 & -\lambda_{\text{H}} \sigma_{1K} \\
    \end{bmatrix}.
\end{equation*}
The aggregate intensity matrix in this case is $Q = Q_0 + Q_1$:
\begin{equation}
  \label{eq:rust:Q}
 Q = \begin{bmatrix}
    -\gamma & \gamma & 0 & \cdots & 0 & 0 \\
    \lambda_{\text{L}} \sigma_{12} & -\lambda_{\text{L}} \sigma_{12}-\gamma & \gamma & \cdots & 0 & 0 \\
    \lambda_{\text{L}} \sigma_{13} & 0 & -\lambda_{\text{L}} \sigma_{13}-\gamma & \cdots & 0 & 0 \\
    \vdots & \vdots & \vdots & \ddots & \vdots & \vdots \\
    \lambda_{\text{H}} \sigma_{1,K-1} & 0 & 0 & \cdots & -\lambda_{\text{H}} \sigma_{1,K-1} - \gamma & \gamma \\
    \lambda_{\text{H}} \sigma_{1K} & 0 & 0 & \cdots & 0 & -\lambda_{\text{H}} \sigma_{1K} \\
    \end{bmatrix}.
\end{equation}
\end{ex:renewal}

\begin{ex:entry}[continued]
Let $h_{ik}$ be the hazard of player $i$ switching from active to inactive or vice versa in state $k$.
We have dropped the $j$ subscript here for notational simplicity since $j=0$ does not change the state.
Let $\gamma_{\text{LH}}$ and $\gamma_{\text{HL}}$ be the rates at which nature switches between demand states (i.e., demand moves from low to high at rate $\gamma_{\text{LH}}$).
The aggregate state space dynamics are illustrated in \autoref{fig:2x2x2-entry}.
Recall that the reduced form hazards $h_{ik}$ of firm $i$ taking action $j=1$ in state $k$ are related to the structural quantities through the relation $h_{ik} = \lambda_{ik} \sigma_{ijk}$.

\begin{figure}[tb]
 \centering
 \resizebox{0.8\textwidth}{!}{\includegraphics{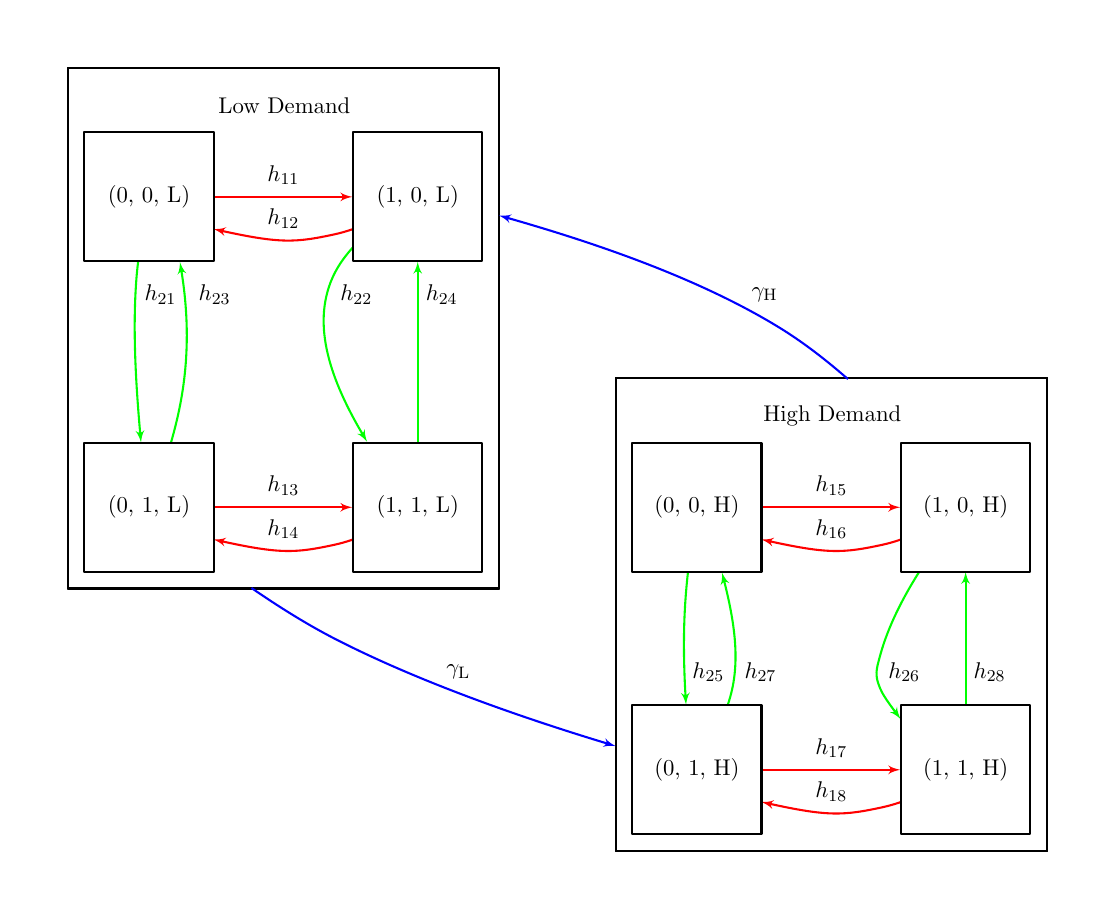}}\\
 \medskip
 \begin{footnotesize}
  Two demand states, $\text{L}$ and $\text{H}$, two firms, and two choices ($j = 0$ continue, $j=1$ switch).
  Reduced form hazards $h_{ik}$, denoting the rates of switching ($j=1$),
  are related to the structural quantities as $h_{ik} = \lambda_{ik} \sigma_{i1k}$.
 \end{footnotesize}
 \caption{Two Player Entry Game with Exogenous Demand State}
 \label{fig:2x2x2-entry}
\end{figure}

The state transition hazards can be characterized by an $8 \times 8$ intensity matrix $Q$.
Note that firms cannot change the demand state, firms cannot change each other's states,
and nature cannot change the firms' states.
Therefore, the overall intensity matrix is a sparse matrix of the form
\begin{small}
\begin{equation}
\label{eq:2x2x2:Q}
Q = \left[
\begin{array}{cccc|cccc}
  \cdot & h_{11} & h_{21} & 0 & \gamma_{\text{L}} & 0 & 0 & 0 \\
  h_{12} & \cdot & 0 & h_{22} & 0 & \gamma_{\text{L}} & 0 & 0 \\
  h_{23} & 0 & \cdot & h_{13} & 0 & 0 & \gamma_{\text{L}} & 0 \\
  0 & h_{24} & h_{14} & \cdot & 0 & 0 & 0 & \gamma_{\text{L}} \\
  \hline
  \gamma_{\text{H}} & 0 & 0 & 0 & \cdot & h_{15} & h_{25} & 0 \\
  0 & \gamma_{\text{H}} & 0 & 0 & h_{16} & \cdot & 0 & h_{26} \\
  0 & 0 & \gamma_{\text{H}} & 0 & h_{27} & 0 & \cdot & h_{17} \\
  0 & 0 & 0 & \gamma_{\text{H}} & 0 & h_{28} & h_{18} & \cdot \\
\end{array}
\right],
\end{equation}
\end{small}
The diagonal elements have been omitted for brevity.
Importantly, the locations of the nonzero elements are distinct because the state-to-state communication patterns differ.
Therefore, given $Q$ we can immediately determine $Q_0$, $Q_1$, and $Q_2$.
\end{ex:entry}

\subsection{Comparison with Discrete Time Models}
\label{sec:ctdt}

We conclude with remarks on continuous time and discrete time models.
First, consider a typical discrete time model where agents move simultaneously and the period between decisions is calibrated to the sampling period of the data.
In an entry/exit setting where the choice set is $\mathcal{J} = \{ 0, 1 \}$, there is exactly one entry or exit decision per year.
In discrete time data, passive continuation actions are coded as active decisions, but in reality they represent the absence of an active choice during the period.
Consider a chain store setting where the choice is the net number of stores to open during the year, $\mathcal{J} = \{ -J, \dots, J \}$.
This implies at most $J$ openings or closings per year.
Hence, $J$ must be chosen to be the maximum number of possible stores opened or closed by any chain firm in any period.

Now consider a continuous time model with a common move arrival rate $\lambda$ for all players and states.
In the entry/exit setting, the choice set is still $\mathcal{J} = \{ 0, 1 \}$ which implies \emph{on average} $1/\lambda$ entries or exits per year.
Multiple entries and exits are allowed, and the parameters imply a distribution over the number and type of such events.
The choice set represents possible \emph{instantaneous} state changes, so in the chain store expansion example, if no more than one store opens or closes simultaneously, we specify $\mathcal{J} = \{ -1, 0, 1 \}$.
This implies \emph{on average} at most $1 / \lambda$ openings or closings per year.
In our continuous time model the rate $\lambda$ is a free parameter that can adjust to match the data, not imposing an ad hoc restriction on the number of actions per unit of time.
The time-aggregated implications of the continuous time model are not functionally different if we change the time period and are unrelated to the sampling period, a feature of the data collection process.

\section{Identification}
\label{sec:identification}

Due to time aggregation, our identification analysis
separates the data issue---that we may only observe $P(\Delta)$
instead of $Q$---from recovering
the structural parameter $\theta$ from the continuous-time
reduced form $Q$.
We proceed in two steps; researchers with continuous
time data can begin with the second step.\footnote{Although we
  take a sequential approach to identification,
  a direct approach from $P(\Delta)$ to $\theta$ may
  also be possible.
  While a sequential approach could, in principle,
  impose additional constraints when identifying $Q$ from $P(\Delta)$ that are unnecessary
  for directly identifying $\theta$ from $P(\Delta)$, in this case we exploit
  non-parametric structural restrictions from the model to identify $Q$ in the first step,
  mitigating this concern.}
Deriving the structural model's implications is a
bottom-up exercise: the structural primitives $u$ and $\psi$ imply
value functions $V$ which imply choice probabilities $\sigma$.
These probabilities with move rates $\lambda$ and
state transitions by nature $Q_0$ imply an intensity matrix
$Q$.
Given the $Q$ matrix and a sampling process, this
implies a data generating process.
For a fixed sampling interval $\Delta$ the distribution
of observable data is $P(\Delta) = \exp(\Delta Q)$.

The identification problem requires us to consider
the inverse problem.
These steps are represented in \autoref{fig:ident}.
If the complete continuous time process is observable, then
$Q$ is trivially identified and we can move to identification of
the structural model.
However, for discrete time data we must use our
knowledge of the data generating process, represented
by the transition matrix $P(\Delta)$ for an interval $\Delta$, to
derive conditions under which we can uniquely determine the reduced
form intensity matrix $Q$.
We show this is possible under mild conditions using
restrictions that the structural model places on the $Q$ matrix.

\begin{figure}[tb]
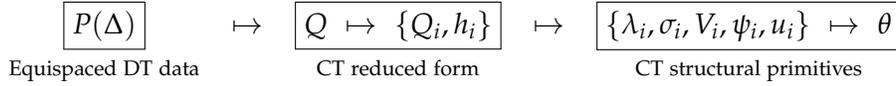

  \begin{tabular}{ccccc}
    \fbox{$P(\Delta)$} & $\mapsto$ & \fbox{$Q\,\,\mapsto\,\,\{ Q_i, h_i \}$} & $\mapsto$ & \fbox{$\{ \lambda_i, \sigma_i, V_i, \psi_i, u_i \}\,\,\mapsto\,\,\theta$} \\
    \scriptsize{Equispaced DT data} & & \scriptsize{CT reduced form} & & \scriptsize{CT structural primitives} \\
  \end{tabular}
  \caption{Identification Analysis}
  \label{fig:ident}
\end{figure}

With $Q$ in hand, we turn to the structural
primitives of the model:
the flow payoffs $u$ and instantaneous payoffs $\psi$.
We show that knowledge of $Q$ allows us to recover these structural primitives with a smaller number of additional identifying restrictions than are required in discrete time models.
This is due to the absence of simultaneous moves at any given instant, which is also the source of the computational efficiency of the model.

\subsection{Identification of $Q$}
\label{sec:identification:Q}

With continuous-time data, identification and estimation of the
intensity matrix for finite-state Markov jump processes
is well-established
\citep{billingsley61statistical}.
However, when a continuous-time process is only sampled at discrete
points in time, the parameters of the underlying continuous-time model
may not be point identified.\footnote{This is known as the
  \emph{aliasing problem} and has been studied in the
  context of continuous-time systems of stochastic differential equations
  \citep{sims-1971, phillips-1973, hansen-sargent-1983,
    hansen-sargent-1991-ch9, geweke-1978, kessler-rahbek-2004,
    mccrorie-2003, blevins-2017}.
  See Figure 1 of \cite{blevins-2017} for an illustration in the
  frequency domain, where the problem is perhaps most obvious.}
In the present model, the concern is that there may be multiple $Q$
matrices which give rise to the same data generating process,
which is the transition probability matrix $P(\Delta)$ in the
case of fixed sampling at an interval $\Delta$.%
\footnote{A related issue is the embeddability problem: could the
  transition matrix $P(\Delta)$ have been generated by a
  continuous-time Markov jump process for some intensity matrix $Q$ or
  some discrete-time chain over fixed time periods of length $\delta$?
  This problem was first proposed by \cite{elfving37zur}.
  \cite{kingman62imbedding} derived the set of embeddable processes
  with $K = 2$ and \cite{johansen74some} gave an explicit description
  of the set for $K = 3$.
  \cite{singer-spilerman-1976} summarize several known necessary
  conditions for embeddability involving testable conditions on the
  determinant and eigenvalues of $P(\Delta)$.
  We assume throughout that the continuous time model is
  well-specified and that such an intensity matrix exists.}

In discrete time settings, a similar identification problem is
masked when assuming the unknown frequency of moves equals
the (known) sampling frequency \citep{hong15aggregated}.
Suppose agents move at intervals of length $\delta$ with transition
matrix $P_0$ while the data sampling interval is $\Delta > \delta$.
Then the mapping between the data (equispaced observations at length $\Delta$)
and the transition matrix is: $P(\Delta) = P_0^{\Delta/\delta}$.
Generally, there are multiple solutions to this equation
\citep{gantmacher-1959, singer-spilerman-1976}, meaning
identification of $P_0$ is non-trivial.

Previous work on this identification problem seeks conditions on
the observable discrete-time transition matrix $P(\Delta)$.
We briefly review these results in the next subsection, but
our approach is to show that one can identify $Q$ via
identifying restrictions on the primitives of the underlying
structural model and that such restrictions arise from the model itself.
These can be viewed as exclusion restrictions.

For example, in applications there are typically player-specific
components of the state vector where player $i$ cannot
change the player-specific state of player $j$ and vice-versa.
In an entry-exit model, such a state is incumbency status: players can
enter and exit by their own action, but no player can enter or exit on
behalf of another.
Similarly, if the state vector has components that are
exogenous state variables, such as population, then any
state changes involving those variables must be due to
nature and not by any player.
This natural structure implies many linear restrictions on the $Q$ matrix.
We show that restrictions of this form serve to limit the domain of the mapping
$Q \mapsto \exp(\Delta Q) = P(\Delta)$
to guarantee the intensity matrix $Q$
is identified.

\subsubsection{Identification of Unrestricted $Q$ Matrices}

Returning to identification of $Q$, recall that
the question is whether there exists a unique matrix $Q$ that leads to
the observed transition matrix $P(\Delta) = \exp(\Delta Q)$ when observations are
sampled at intervals of length $\Delta$.
The matrix logarithm $\ln P(\Delta)$ is not unique in general
\citep[see][]{gantmacher-1959, singer-spilerman-1976},
so the question amounts to finding conditions for a unique solution.

Previous, mathematical treatments view the relationship
$\exp(\Delta Q) = P(\Delta)$ from the perspective of the transition matrix $P(\Delta)$.
In such cases there is no underlying model that generates $Q$,
so $Q$ itself is the primitive of interest and is unrestricted
(subject to being a valid intensity matrix).
Most previous work focused on finding sufficient conditions on
the matrix $P(\Delta)$ to guarantee that $\ln P(\Delta)$ is unique.
For example, if the eigenvalues of $P(\Delta)$ are distinct, real, and
positive, then $Q$ is identified \citep{culver-1966}.
More generally, \cite{culver-1966} proved that $Q$ is identified if
the eigenvalues of $P(\Delta)$ are positive and no elementary divisor
(Jordan block) of $P(\Delta)$ appears more than once.
Other sufficient conditions include
$\min_k\{ P_{kk}(\Delta) \} > 1/2$ \citep{cuthbert-1972}
and
$\det P(\Delta) > \e^{-\pi}$ \citep{cuthbert-1973}.
See \cite{singer-spilerman-1976} for a summary.

Other conditions involve alternative sampling schemes.
For example, $Q$ is identified if the sampling
$\Delta$ is sufficiently small \citep{cuthbert-1973,
  singer-spilerman-1976, hansen-sargent-1983}.
Alternatively, $Q$ is identified if
the process is sampled at distinct intervals $\Delta_1$ and
$\Delta_2$ where $\Delta_2 \neq k \Delta_1$ for any integer $k$
\citep[][5.1]{singer-spilerman-1976}.

The first conditions---restrictions on $P(\Delta)$---are
based on a ``top down'' approach and are undesirable when
$Q$ is generated by an underlying model.
The second conditions are based on how the
continuous time process is sampled, which cannot be changed
if data are already collected at regular intervals.
Instead, we take a ``bottom up'' approach which allows economic theory
to inform our identification conditions via restrictions on $Q$ that
guarantee uniqueness of $\ln P(\Delta)$.
More compelling conditions involve cross-row and cross-column restrictions on the $Q$ matrix and
known zeros of the $Q$ matrix.
Such restrictions arise naturally once players, actions, and resulting state transitions are
defined.

\subsubsection{Structural Restrictions for Identification of $Q$}

The problem of identifying continuous time models with only discrete
time data has appeared previously in the econometrics literature,
in work by \cite{phillips-1973} on continuous time regression models.
He considered multivariate, continuous-time, time-homogeneous
regression models of the form
$y'(t) = A y(t) + \xi(t)$,
where $y(t)$ is an $n \times 1$ vector and $A$ is an $n \times n$
structural matrix.
He discussed the role of prior information on $A$ and how
it can lead to identification.
He showed that $A$ is identified given discrete time observations
on $y$ if $A$ satisfies certain rank conditions.

Our identification strategy is inspired by this work, but our model
differs because the $Q$ matrix is an intensity matrix
(rather than an arbitrary matrix of regression coefficients) and has
sparse structure dictated by an underlying structural
model.
Yet, there are similarities: the present model can be
characterized by a system of differential equations, where the intensity
matrix $Q$ plays a role similar to the matrix $A$.
If $Q$ is a valid intensity matrix, then the functions $P(\Delta)$
solving this system are the transition matrices of continuous-time
stationary Markov chains \cite[][p. 251--257]{chung-1967}.

The structural model restricts $Q$ to a lower-dimensional subspace
since it is sparse and must satisfy both within-row and across-row
restrictions, and given the results above it seems likely that these
restrictions could lead to identification of $Q$.
That is, even if there are multiple matrix solutions to the equation
$P(\Delta) = \exp(\Delta Q)$, it is unlikely that two of them
simultaneously satisfy the restrictions of the structural model.
We return to the two examples introduced previously to illustrate this idea.

\begin{ex:renewal}[continued]
In the single-agent renewal model the aggregate intensity matrix is given in
\eqref{eq:rust:Q}.
The number of nonzero hazards is substantially less than the total.
Consider $K = 90$: there are $90^2 - 90 = 8,010$ non-trivial state-to-state transitions.
Only 178 are permitted at any instant: 89 due to nature and 89 by player action.
The remaining 7,832 transitions are not possible in a single event.
Nature cannot decrease mileage and can only increase it by one state
at a given instant (although multiple state jumps are possible over an
interval).
The agent can only reset mileage to the initial state.
Therefore, there are 7,832 known zeros of $Q$---more than sufficient to uniquely identify $Q$.
Given $Q$, we can separately determine both $Q_0$ and $Q_1$.
The choice-specific hazards $h_{1k}$ are products of the
move arrival rates and conditional choice probabilities, which
introduces shape restrictions on $h_{1k} = \lambda_k\sigma_{1k}$ across states $k$.
\end{ex:renewal}

\begin{ex:entry}[continued]
In the $2 \times 2 \times 2$ entry example, the aggregate intensity matrix is
given by \eqref{eq:2x2x2:Q}.
Some transitions cannot happen at all, such as $(0,1,\text{L})$ to $(1,0,\text{L})$.
The remaining transitions can happen only due to the action of one of the firms, but not the other.
For example, moving from $(0,0,\text{H})$ to $(1,0,\text{H})$ is only possible if firm $1$ chooses to become active.
From any state, the set of other states to which either firm can move the state as a result of an action is limited naturally by the model and the definition of the state space.
This structure yields intensity matrices that are sparse, which makes identification of $Q$ more likely even with time aggregation since any observationally equivalent $Q$ matrix must have the same sparsity pattern.
Finally, given $Q$ we can again separately recover $Q_0$, $Q_1$, and $Q_2$.
\end{ex:entry}

Similar sparse structures arise in even models with large numbers of players and
millions of states, as in the application of ABBE.
In light of this lower-dimensional structure, we build on the results of
\cite{blevins-2017} who gave sufficient conditions for identification in
first-order linear systems of stochastic differential equations.
We apply those results to the case of finite-state Markov jump processes
generated by our structural model.

The key insight is that structural restrictions on $Q$
of the form $R \vectorize(Q) = r$ can rule out alternative solutions to the
matrix exponential equation $\exp(\Delta \tilde{Q}) = P(\Delta)$.
When there are sufficiently many linearly independent restrictions, the
intensity matrix $Q$ is uniquely identified.
Adapting Theorem 1 of \cite{blevins-2017} to finite-state Markov jump
processes, we require at least $\floor{\frac{K-1}{2}}$ linear restrictions
when $R$ has full rank.

The following theorem establishes that there are sufficiently many full rank
restrictions to identify $Q$ in a broad class of games.
This theorem includes exogenous market-specific state variables and shows
that such states increase the number of zero restrictions and make
identification of $Q$ more likely, as do player-specific state
variables.

\begin{theorem}[Identification of $Q$]
  \label{thm:id_Q}
  Suppose the state vector is $x = (x_0, x_1, \dots, x_\Nplayers) \in \mathcal{X}_0 \times \mathcal{X}_1 \times \dots \times \mathcal{X}_\Nplayers$ where the component $x_0 \in \mathcal{X}_0$ is an exogenous market characteristic taking $\abs{\mathcal{X}_0} = K_0$ values and for each $i = 1, \dots, \Nplayers$ the component $x_i$ is a player-specific state affected only by the action of each player with $\abs{\mathcal{X}_i} = K_1$ possible distinct values.
  If $Q$ has distinct eigenvalues that do not differ by an integer multiple of $2 \pi i/\Delta$, then $Q$ is identified when
  \begin{equation}
    \label{eq:id_Q}
    K_0 K_1^\Nplayers - K_0 - \Nplayers J + \frac{1}{2} \geq 0.
  \end{equation}
  The quantity on the left is strictly increasing in $K_1$,
  strictly increasing in $K_0$ when $K_1 > 1$, and
  strictly decreasing in $J$.
\end{theorem}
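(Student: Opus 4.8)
The plan is to reduce the claim to a counting argument: express the dimension of the subspace of intensity matrices that share the structural zero pattern of $Q$, and then invoke the adaptation of Theorem 1 of \cite{blevins-2017} quoted just before the statement, which identifies $Q$ from $P(\Delta)$ whenever the number of linearly independent restrictions $R\vectorize(Q) = r$ is at least $\floor{\frac{K-1}{2}}$ (with $R$ of full rank and the stated eigenvalue condition on $Q$ in force). Here $K = K_0 K_1^\Nplayers$. So the task is to show the structural model supplies at least $\floor{\frac{K-1}{2}}$ independent zero restrictions on $\vectorize(Q)$, which is equivalent to showing the number of \emph{free} off-diagonal entries of $Q$ is at most $K - \floor{\frac{K-1}{2}} = \lceil \frac{K+1}{2}\rceil$, and then checking that \eqref{eq:id_Q} is exactly the inequality that makes this hold.

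First I would enumerate the transitions the model permits from a generic state $x = (x_0, x_1, \dots, x_\Nplayers)$. By Assumption~\ref{assp:dn}(a) the continuation action $j=0$ never changes the state, so each player $i$ contributes at most $J-1$ possible destination states (one for each switching action $j = 1, \dots, J-1$), and these lie in the slice where only $x_i$ changes. Nature can change only the exogenous component $x_0$, contributing at most $K_0 - 1$ possible destinations from each state. Summing over the $K = K_0 K_1^\Nplayers$ rows, the total number of off-diagonal entries of $Q$ that are \emph{not} forced to zero is at most $K\big(\Nplayers(J-1) + (K_0-1)\big)$; the diagonal entries are determined by the zero-row-sum convention and carry no free parameters. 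However, this raw count double-counts nothing but is far larger than $\lceil\frac{K+1}{2}\rceil$ in general, so a naive per-row count is too crude. The right object is instead the number of \emph{distinct parameters}, not the number of nonzero cells: the model ties together many of these hazards. Concretely, the nature-induced entries are governed by the $K_0 \times K_0$ exogenous intensity matrix $Q_0$ acting identically on every endogenous slice, contributing at most $K_0(K_0-1)$ free parameters but—more to the point—the relevant restriction count I should extract is the number of zeros, and I would instead argue directly that the free coordinates number $K_0 + \Nplayers J$ after accounting for the block structure, so the zero restrictions number $K - (K_0 + \Nplayers J)$, and the hypothesis \eqref{eq:id_Q} rearranges to $K - (K_0 + \Nplayers J) \ge \frac{K-1}{2}$, i.e. to having enough restrictions.

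Let me restate that middle step more carefully, since it is the crux. Writing $\vectorize(Q)$ has $K^2$ entries; the zero-row-sum convention removes $K$ of them (the diagonal), leaving $K(K-1)$ off-diagonal coordinates. The structural model asserts that the only nonzero off-diagonal entries are: (i) for each ordered pair of distinct values of $x_0$ with the same endogenous part, a nature transition, and these are the same across all $K_1^\Nplayers$ endogenous slices, and (ii) for each player $i$ and each switching action, a transition changing only $x_i$. Crucially, what Theorem 1 of \cite{blevins-2017} requires is simply a full-rank system $R\vectorize(Q)=r$ of the stated size; the known zeros of $Q$ furnish exactly such a system, one row of $R$ per forced-zero coordinate, and these rows are trivially linearly independent (each is a distinct standard basis vector of $\R^{K^2}$, so $R$ has full row rank automatically). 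Hence the number of usable restrictions equals the number of forced zeros among the $K(K-1)$ off-diagonal coordinates. The number of off-diagonal coordinates that are \emph{allowed} to be nonzero is $K_0 - 1$ nature-destinations plus $\Nplayers(J-1)$ player-destinations per row, i.e. $K\big((K_0-1)+\Nplayers(J-1)\big)$ cells, but for the identification theorem we may keep only $\floor{\frac{K-1}{2}}$ of the zero restrictions, so it suffices to exhibit that many forced zeros—equivalently that $K(K-1) - K\big((K_0-1)+\Nplayers(J-1)\big) \ge \floor{\frac{K-1}{2}}$; since $K = K_0K_1^\Nplayers$, dividing by $K$ and simplifying turns this into the displayed \eqref{eq:id_Q} up to the $\frac12$-rounding bookkeeping. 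The monotonicity claims then follow by inspecting the left-hand side of \eqref{eq:id_Q} as a function of $K_1$ (through $K = K_0 K_1^\Nplayers$), of $K_0$, and of $J$.

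The main obstacle is the accounting in the previous paragraph: being honest about whether the relevant count is per-row or global, about whether shared parameters across slices help or merely reshuffle zeros, and about making the rounding ($\floor{\cdot}$ versus the $+\frac12$) line up exactly with \eqref{eq:id_Q} rather than off by one. I would therefore write the count as: total off-diagonal coordinates $K(K-1)$; subtract the maximum number that the model permits to be nonzero, $K\big((K_0-1)+\Nplayers(J-1)\big)$; confirm the remainder is $\ge \floor{\frac{K-1}{2}}$ by showing $\tfrac{K-1}{2} - \big[(K-1) - (K_0-1) - \Nplayers(J-1)\big]\cdot\tfrac{1}{?}$—and here I must be careful that the theorem's bound $\floor{\frac{K-1}{2}}$ is a bound on the restriction count not scaled by $K$—so in fact the cleaner route is to observe that the model gives $K(K-1) - K\big((K_0-1)+\Nplayers(J-1)\big)$ zeros, which already vastly exceeds $\floor{\frac{K-1}{2}}$ whenever the per-row free count $(K_0-1)+\Nplayers(J-1)$ is below $K - \tfrac12$, i.e. precisely \eqref{eq:id_Q} after moving terms. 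Once the inequality is pinned down, the eigenvalue hypothesis is simply carried over verbatim into the invocation of \cite{blevins-2017}, and the proof closes.
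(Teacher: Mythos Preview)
Your approach is essentially the paper's: count the forced-zero off-diagonal entries of $Q$ per row (at least $K - K_0 - \Nplayers(J-1)$, since nature contributes at most $K_0-1$ nonzero destinations and each player at most $J-1$), sum across the $K$ rows, observe that these zeros yield a full-rank $R$ automatically (distinct standard basis rows), and compare the total to the $\lfloor (K-1)/2 \rfloor$ threshold from \cite{blevins-2017}. Your digression about counting ``distinct parameters'' rather than nonzero cells is a red herring you rightly abandon, and your final claim that the per-row inequality is ``precisely \eqref{eq:id_Q}'' is slightly off---the direct simplification produces $K_0K_1^{\Nplayers} - K_0 - \Nplayers(J-1) \geq \tfrac12$, i.e.\ a constant $+\Nplayers-\tfrac12$ rather than $+\tfrac12$---but \eqref{eq:id_Q} implies that weaker inequality for $\Nplayers\geq 1$, so the argument still closes (the paper's own simplification step has the same feature).
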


The sparsity of $Q$ helps and is increasing in both the number of exogenous states $K_0$ and player-specific states $K_1$, but decreasing in the number of choices $J$.
Therefore, for identification we need either a sufficiently large number of states or a sufficiently small number of choices.
Fortunately, in most applications $J$ is small relative to $K$---particularly in continuous time models as discussed in Section~\ref{sec:ctdt}.

\begin{ex:entry}[continued]
Our running entry model example is a binary choice game
with $\Nplayers = 2$, $J = 2$, $K_0 = 2$, and $K_1 = 2$,
so by \autoref{thm:id_Q} $Q$ is identified.
\end{ex:entry}

Furthermore, we can see that any binary choice game ($\Nplayers > 1$ with $J = 2$) with meaningful player-specific states ($K_1 > 1$) is identified, regardless of the number of players or exogenous market states $K_0$.
The sufficient condition in this case simplifies to $K_0 (K_1^\Nplayers - 1) \geq N - \frac{1}{2}$.
When $K_0 \geq 1$ and $K_1 \geq 2$ we have $K_0 (K_1^\Nplayers - 1) \geq 2^\Nplayers - 1$ which exceeds $\Nplayers - \frac{1}{2}$ for integers $\Nplayers > 1$.

\subsubsection{Identification of $Q_i$}

Once the $Q$ matrix is known---or in the case of continuous-time data, identified directly---we need to ensure that in any particular state it does not represent a mixture over potentially multiple equilibria.
To guarantee this, we invoke an assumption corresponding to Assumption 6 of ABBE, which was in turn a continuous-time version a similar assumption required for identification and estimation of discrete time dynamic games \citep{bajari-benkard-levin-2007, aguirregabiria-mira-2007}. See \cite{aguirregabiria-mira-2010} for a survey.

\begin{assumption}[Multiple Equilibria]
  \label{assp:eq}
  The continuous time data generating process is such that in each state $k = 1, \dots, K$:
  (a) A single Markov perfect equilibrium is played corresponding with row $k$ of the intensity matrix $Q$.
  (b) Players' expectations about the distribution of state transitions are consistent with the intensity matrix $Q$.
\end{assumption}

In a model with a unique equilibrium---for example the single agent renewal model---this assumption is satisfied trivially.
In games, it requires that in any markets where the game is in the same state, the same equilibrium is played.
We need this assumption no matter if we observe discrete time data from $P(\Delta)$, generated from some continuous-time $Q$, or we observe continuous-time data generated from $Q$ directly.

Next, we make the following assumption which requires that given the aggregate intensity matrix $Q$, we can determine the player-specific intensity matrices $Q_i$.

\begin{assumption}
  \label{assp:Q_i}
  The mapping $Q \to \{ Q_0, Q_1, \dots, Q_\Nplayers \}$ is known.
\end{assumption}

Assumption~\ref{assp:Q_i} can be easily verified by inspecting $Q$ in both running examples since players cannot change each other's state variables and actions by nature can be distinguished from player actions.
Note that the diagonal elements are unimportant: if the off-diagonal elements of each $Q_i$ can be identified from $Q$, then diagonal elements equal the negative sum of the off-diagonal elements.
In the renewal example $Q$ is given in \eqref{eq:rust:Q} and for the two-player entry model in \eqref{eq:2x2x2:Q}.
A sufficient condition for \autoref{assp:Q_i} is that continuation states resulting from actions of different players are distinct:
for all players $i$ and $m \neq i$ and all states $k$,
\begin{equation*}
  \{ l(i,j,k): j = 1, \dots, J-1 \} \cap \{ l(m,j,k): j = 1, \dots, J-1 \} = \varnothing.
\end{equation*}

\subsection{Identification of Hazards, Value Functions and Payoffs}
\label{sec:identification:v}

We now establish that the value functions, instantaneous payoffs, and utility
functions are identified.
Let $V_i = ( V_{i1}, \dots, V_{iK} )^\top$ denote the $K$-vector of valuations
for player $i$ in each state.
Let $\psi_{ij} = ( \psi_{ij1}, \dots, \psi_{ijK} )^\top$ denote the $K$-vector
of instantaneous payoffs for player $i$ making choice $j$ in each state and
let $\psi_i = (\psi_{i1}^\top, \dots, \psi_{i,J-1}^\top)^\top$.

Importantly, we note that when $j = 0$ is a latent or unobserved continuation
action, it is not possible to identify the rates $h_{i0k}$ even with
continuous time data, so we cannot immediately treat them as identified
quantities.
Under Assumption~\ref{assp:tiev}, the relationship between identified
hazards $h_{ijk}$ for $j > 0$ and the unknown quantities
$h_{i0k}$, $\psi_{ijk}$, and $V_{ik}$ takes the form of a log-linear
system.
Let $h_i^+$ denote the vector of identified hazards for choices $j > 0$
and $h_i^0 = (h_{i01}, \dots, h_{i0K})^\top$ denote the vector
of unidentified hazards for the continuation action $j = 0$.
The hazards for choices $j = 1, \dots, J-1$ are identified from $Q$,
but without additional restrictions, the system has $2K$ more
unknowns than equations, preventing identification of the
remaining unknowns.

The following theorem shows that with $2K$ appropriate linear restrictions,
all these quantities can be identified.
Notably, the number of restrictions required per player is independent
of the number of players in the game, so the total number of
identifying restrictions is only linear in $\Nplayers$.
This contrasts with discrete time models where the number of
restrictions needed is exponential in $\Nplayers$
\citep{pesendorfer08asymptotic}.

\begin{theorem}
  \label{thm:ident:v:psi:h0}
  Suppose Assumptions~\ref{assp:dx}--\ref{assp:Q_i} hold.
  Then, for each player $i$ the identified log
  hazards $\ln h_i^+$ form a linear system with the unidentified
  quantities $\ln h_i^0$, $\psi_i$, and $V_i$.
  Augmented with linear restrictions represented by a matrix $R_i$
  and vector $r_i$, the system becomes
  \begin{equation*}
    \begin{bmatrix}
      X_i \\ R_i
    \end{bmatrix}
    \begin{bmatrix} \ln h_i^0 \\ \psi_i \\ V_i \end{bmatrix}
    = \begin{bmatrix} \ln h_i^+ \\ r_i \end{bmatrix},
  \end{equation*}
  where $X_i$ is a known $(J-1)K \times (J+1)K$ matrix
  defined in \eqref{eq:ident:v:psi:h0:X_i}.
  The matrix $X_i$ has rank $(J-1)K$.
  If R contains $2K$ additional full-rank restrictions such that
  $\left[ \begin{smallmatrix} X_i \\ R_i \end{smallmatrix} \right]$
  has rank $(J+1)K$, then $h_i^0$, $\psi_i$, and $V_i$
  are identified.
\end{theorem}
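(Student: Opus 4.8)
The plan is to obtain the linear system explicitly from the logit choice probabilities, show that the coefficient block $X_i$ has full row rank by exhibiting an identity submatrix inside it, and then conclude identification from full column rank of the augmented system. The first and only substantive step is the derivation of the log-linear relations. Under Assumption~\ref{assp:tiev} the equilibrium CCPs are logistic as in \eqref{eq:ccp:logit}, so using the hazard identity $h_{ijk} = \lambda_{ik}\sigma_{ijk}$ and taking logarithms gives, for every $j$ and $k$,
\[
  \ln h_{ijk} = \ln\lambda_{ik} + \psi_{ijk} + V_{i,l(i,j,k)} - \ln\textstyle\sum_{j'}\exp\bigl(\psi_{ij'k}+V_{i,l(i,j',k)}\bigr).
\]
By Assumption~\ref{assp:dn}(a), $\psi_{i0k}=0$ and $l(i,0,k)=k$, so the $j=0$ instance is $\ln h_{i0k} = \ln\lambda_{ik} + V_{ik} - \ln\sum_{j'}\exp(\cdot)$; subtracting it from the $j>0$ instances cancels both $\ln\lambda_{ik}$ and the log-sum-exp term, leaving for $j=1,\dots,J-1$ and $k=1,\dots,K$
\[
  \ln h_{ijk} = \ln h_{i0k} + \psi_{ijk} + V_{i,l(i,j,k)} - V_{ik}.
\]
Stacking these $(J-1)K$ equations in the same ordering used to form $\psi_i$ yields $X_i(\ln h_i^0,\psi_i,V_i)^\top = \ln h_i^+$, where the row indexed by $(j,k)$ carries a $1$ in the $\ln h_{i0k}$ column, a $1$ in the $\psi_{ijk}$ column, and the vector $e_{l(i,j,k)}-e_k$ in the $V_i$ block ($e_l$ the $l$th standard basis vector of $\R^K$); by Assumption~\ref{assp:dn}(a)--(b), $l(i,j,k)\neq k$ for $j>0$, so those two $V$-entries are distinct. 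This matrix is the $X_i$ of \eqref{eq:ident:v:psi:h0:X_i}, and its left-hand side is identified because, by Assumptions~\ref{assp:eq} and~\ref{assp:Q_i}, $Q_i$ is recovered from $Q$ and, by Assumption~\ref{assp:dn}(b), the column-$l(i,j,k)$ off-diagonal entry of row $k$ of $Q_i$ equals $h_{ijk}$ for each $j>0$; under Assumption~\ref{assp:tiev} $\sigma_{ijk}>0$, hence $h_{ijk}>0$ and $\ln h_{ijk}$ is well defined wherever player $i$ actually moves.

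Next I would establish the rank claim and then identification. Restricting to the columns of $X_i$ associated with $\psi_i$, each payoff $\psi_{ijk}$ with $j>0$ appears in exactly one equation---its own---with coefficient one, so that submatrix is, after the natural permutation, the identity $I_{(J-1)K}$; hence $X_i$ contains a nonsingular $(J-1)K\times(J-1)K$ block and has full row rank $(J-1)K$. The unknown vector $(\ln h_i^0,\psi_i,V_i)^\top$ has length $K+(J-1)K+K=(J+1)K$, so $X_i$ alone leaves a $2K$-dimensional null space. Appending $2K$ restrictions $R_i(\ln h_i^0,\psi_i,V_i)^\top = r_i$ that hold at the true parameters and for which the stacked matrix $\left[\begin{smallmatrix}X_i\\ R_i\end{smallmatrix}\right]$ has rank $(J+1)K$ makes that matrix injective; since the true parameters satisfy the stacked system by construction, it has a unique solution equal to the truth, so $\ln h_i^0$, $\psi_i$, and $V_i$ are identified and exponentiation recovers $h_i^0$. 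Carrying this out separately for each $i$ proves the theorem, using exactly $2K$ restrictions per player regardless of $\Nplayers$.

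The only genuinely non-mechanical step is the cancellation above, which rests entirely on Assumption~\ref{assp:dn}(a): without $\psi_{i0k}=0$ and $l(i,0,k)=k$ the $j=0$ equation would not isolate $\ln h_{i0k}$, and neither $\ln\lambda_{ik}$ nor the normalizing term would drop out. A secondary, minor point is that the logarithms require $h_{ijk}>0$; in any state where player $i$ has no move arrivals the associated rows are simply omitted and the dimension counts adjust accordingly. Everything else is elementary linear algebra: exhibiting the identity submatrix for the rank statement, and invoking injectivity of a full-column-rank map for uniqueness.
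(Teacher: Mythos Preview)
Your proposal is correct and follows essentially the same route as the paper: derive the log-linear relation by differencing log hazards under the logit CCPs, stack into $X_i$, count unknowns, and conclude identification once the augmented matrix has full column rank.

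The one substantive difference is the rank argument. The paper argues that $X_i$ has rank $(J-1)K$ because, under Assumption~\ref{assp:dn}, each $S_{ij}-I_K$ has zero diagonal and hence ``full rank,'' with these blocks ``linearly independent across $j$.'' Your argument instead observes that the $\psi_i$-columns of $X_i$ form an identity block $I_{(J-1)K}$, so full row rank is immediate. Your route is both simpler and more robust: the paper's claim that zero diagonal implies $S_{ij}-I_K$ is nonsingular is not true in general (for $K=2$ with $l(i,j,1)=2$, $l(i,j,2)=1$, one gets $S_{ij}-I_K=\bigl(\begin{smallmatrix}-1&1\\1&-1\end{smallmatrix}\bigr)$, which is singular), whereas the identity submatrix in the $\psi$-block makes the rank statement unconditional.
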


It is helpful now to consider some examples.
If we assume that the instantaneous payoffs are constant across $k$, as in
many applications of dynamic games, this implies
$\psi_{ijk} - \psi_{ijl} = 0$ for all choices $j > 0$ and all states
$l \neq k$, giving $(J-1)(K-1)$ restrictions per player.
When $J=2$, we still need $K+1$ additional restrictions.
If we also assume the move arrival rate is constant across state
($\sum_{j=0}^{J-1} h_{ijk} = \sum_{j=0}^{J-1} h_{ijl}$ for all $l \neq k$),
we have $K-1$ restrictions.
Then even if $J=2$, only $2$ additional restrictions are needed.

Additional full-rank restrictions are possible for certain applications.
Examples include states where the value function is known, e.g., if
$V_{ik} = 0$ when a firm has permanently exited.
Exclusion restrictions of the form $V_{ik} = V_{ik'}$ are possible, where
$k$ and $k'$ are two states that differ only by a rival-specific state and
are payoff equivalent to firm $i$.
In all of these cases, the rank condition can be verified by inspection.
We also note that Theorem~\ref{thm:ident:v:psi:h0} does not consider identification restrictions across players, but in practice these can provide additional identifying restrictions.

\begin{ex:renewal}[continued]
In the single-agent renewal model, since the replacement cost does not depend on the mileage state we have $\psi_{1k} = \mu$ for all $k$.
This yields $K-1$ restrictions of full rank of the form $\psi_{1k} - \psi_{11} = 0$ for $k = 2, \dots, K$.
If we also assume the rate of move arrivals is constant across two subsets of states (i.e., $\lambda_{\text{L}}$ and $\lambda_{\text{H}}$), this yields $K-2$ additional restrictions.
The linearity of the utility function also imposes restrictions on $V$, and although this does not fit in the linear restriction framework of \autoref{thm:ident:v:psi:h0} it also contributes to identification of $\psi$ and $V$.
\end{ex:renewal}

\begin{ex:entry}[continued]
In the simple two-player entry-exit model, we may suppose that the entry costs and scrap values are independent of the market state (high or low demand) and whether a rival is present.
In other words, $\psi_{i1k} - \psi_{i11} = 0$ for all states $k$, yielding $K-1$ restrictions per player.
Additionally, if we assume the rate of move arrivals is firm-specific ($\lambda_{ik} = \lambda_i$), this yields $K-1$ restrictions per player.
Alternatively, we considered rates depending only on the level of demand ($\lambda_{\text{L}}$ and $\lambda_{\text{H}}$).
This specification would yield $K-2$ restrictions.
\end{ex:entry}

Finally, we note that in practice the overall rate of actions can be identified through the nonlinear restrictions imposed by the distributional assumptions on the error term, which imply shape restrictions on the choice probabilities across states.
These are difficult to characterize in the linear restriction framework we have used here, but in practice parametric assumptions will aid identification in addition to the linear restrictions considered above.

\subsection{Identification of the Payoffs}
\label{sec:identification:u}

Having established the identification of value functions, instantaneous
payoffs, and hazards, we now turn to the identification of flow payoffs.
The following theorem shows that these payoffs can be recovered from the
previously identified quantities through the linear representation of
the value function in \eqref{eq:v:linear} established by
Theorem~\ref{thm:v:linear}.

\begin{theorem}[Identification of Flow Payoffs]
  \label{thm:ident:u}
  Suppose Assumptions~\ref{assp:dx}--\ref{assp:Q_i} hold.
  If for any player $i$ the quantities $V_i$, $\psi_i$, and $Q$ are
  identified, then the flow payoffs $u_i$ are also identified.
\end{theorem}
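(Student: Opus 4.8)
The plan is to solve for $u_i$ directly from the linear representation of the value function in Theorem~\ref{thm:v:linear}. Recall from \eqref{eq:v:linear}--\eqref{eq:Xi_i} that
\begin{equation*}
  V_i(\sigma) = \Xi_i^{-1}(\sigma)\left[ u_i + L_i C_i(\sigma_i) \right],
  \qquad
  \Xi_i(\sigma) = \rho_i I_K + \sum_{m=1}^{\Nplayers} L_m\left[I_K - \Sigma_m(\sigma_m)\right] - Q_0.
\end{equation*}
Premultiplying by $\Xi_i(\sigma)$ and rearranging gives the closed-form expression
\begin{equation*}
  u_i = \Xi_i(\sigma)\, V_i - L_i\, C_i(\sigma_i).
\end{equation*}
So the entire task is to argue that every object on the right-hand side is already identified under the maintained hypotheses.

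First I would assemble the pieces. By Theorem~\ref{thm:ident:v:psi:h0} (whose hypotheses are implied here, since Assumptions~\ref{assp:dx}--\ref{assp:Q_i} are assumed), the value function $V_i$, the instantaneous payoffs $\psi_i$, and the choice-specific hazards $h_i^0$ (hence all $h_{ijk}$) are identified. From the identified hazards $h_{ijk}$ and the relation $h_{ijk} = \lambda_{ik}\sigma_{ijk}$ together with $\sum_{j}\sigma_{ijk}=1$, one recovers $\lambda_{ik} = \sum_j h_{ijk}$ and then the CCPs $\sigma_{ijk} = h_{ijk}/\lambda_{ik}$ in every state with $\lambda_{ik}>0$; hence $L_i$, $L_m$, and every transition matrix $\Sigma_m(\sigma_m)$ (which depends only on $\sigma_m$ and the known map $l(m,\cdot,\cdot)$) are identified. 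Next, $Q$ is identified by hypothesis, and under Assumption~\ref{assp:Q_i} the decomposition $Q\mapsto\{Q_0,Q_1,\dots,Q_\Nplayers\}$ is known, so $Q_0$ is identified; the discount rate $\rho_i$ is known by Assumption~\ref{assp:rho}. Therefore $\Xi_i(\sigma)$ is identified. Finally, $C_i(\sigma_i)$ has $k$-th element $\sum_{j}\sigma_{ijk}\left[\psi_{ijk} + e_{ijk}(\sigma_i)\right]$, where under Assumption~\ref{assp:tiev} the correction term is the closed form $e_{ijk}(\sigma_i) = \gamma_{\text{EM}} - \ln\sigma_{ijk}$ (and more generally $e_{ijk}$ is a known function of the identified choice-specific value differences via Lemma~\ref{lem:ccp:inversion}, which are themselves functions of $\sigma_i$); since $\psi_i$ and $\sigma_i$ are identified, so is $C_i(\sigma_i)$. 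Substituting into $u_i = \Xi_i(\sigma)V_i - L_i C_i(\sigma_i)$ then delivers $u_i$, completing the argument.

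The only genuine subtlety—and the step I would flag as the main obstacle—is the bookkeeping that connects the hazard-level identification of Theorem~\ref{thm:ident:v:psi:h0} to the structural objects $\lambda_{ik}$, $\sigma_{ijk}$, $L_m$, and $\Sigma_m(\sigma_m)$ entering $\Xi_i$ and $C_i$. In particular, one must be careful that the $j=0$ hazards $h_{i0k}$, which are latent, are among the quantities identified by Theorem~\ref{thm:ident:v:psi:h0}; this is precisely what allows recovery of $\lambda_{ik}$ and hence of the full CCP vector rather than only the sub-vector of switching probabilities. Once that is in place, and invariertibility of $\Xi_i$ is inherited from Theorem~\ref{thm:v:linear}, the identification of $u_i$ is immediate from the displayed formula, with no further rank conditions or restrictions required.
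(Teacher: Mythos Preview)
Your proposal is correct and follows essentially the same approach as the paper: invert the linear representation of Theorem~\ref{thm:v:linear} to obtain $u_i = \Xi_i(\sigma)V_i - L_i C_i(\sigma_i)$, then argue that every object on the right is identified by recovering $\lambda_{ik}$ and $\sigma_{ijk}$ from the full set of hazards (including $h_{i0k}$) delivered by Theorem~\ref{thm:ident:v:psi:h0}. Your write-up is more detailed than the paper's in spelling out how each ingredient of $\Xi_i$ and $C_i$ is pinned down, but the logic is identical.
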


\section{Empirical Example and Monte Carlo Experiments}
\label{sec:mc}

In this section, we describe an empirical example along with a series of Monte Carlo experiments conducted using the single-agent renewal model.
Additional experiments using a dynamic quality ladder model are presented in \autoref{sec:examples:ladder}.

\subsection{Maximum Likelihood Estimation}

The model can be estimated using maximum likelihood if equilibria can be
enumerated or there is a unique equilibrium.
Since this paper focuses on identification, rather than developing a new
estimator, our Monte Carlo experiments use the maximum likelihood estimator
with value function iteration.\footnote{More generally, methods proposed
  for discrete time models, such as the homotopy method
  \citep{borkovsky10homotopy, besanko10learning,
  bajari-hong-krainer-nekipelov-2010-wp} or recursive lexicographical
  search \citep{iskhakov16recursive}, could possibly be adapted to our
  model, but this is beyond the present scope.}
Multiplicity of equilibria is not a concern for the single agent model and
appears not to be an issue in practice for the continuous-time oligopoly
model specifications considered in the appendix, although we have not
proven that there is a unique equilibrium in the latter case.

However, in models with multiple equilibria, this maximum likelihood
procedure, which relies on value function iteration, is potentially
unstable.
In such cases, we recommend two-step estimators that avoid this issue.
ABBE introduced a two-step PML (pseudo maximum likelihood) estimator,
similar to the CCP estimator of
\cite{hotz93conditional} for discrete-time single-agent models.
More recently, \cite{ctnpl} developed the continuous time NPL (CTNPL)
estimator, an iterative estimator following
\cite{aguirregabiria-mira-2007}.
However, these two-step methods assume the rate of move arrivals $\lambda$
is known.
Adapting them to the general case remains an open question for future
research.
Similarly, other estimators for discrete time models such as those
by \cite{aguirregabiria-marcoux-2019} and \cite{dearing-blevins-2024}
could be adapted to the current framework.

We focus on the maximum likelihood estimator for the
empirical example and simulations.
This  allows us to examine the computational properties and how estimates
behave when the sampling frequency of the data
changes without two-step estimation error.

With continuous-time data, we have a sample of $\bar{N}$ tuples $(\tau_n,i_n,a_n,k_n,k_n')$.
Each describes a jump or move where:
$\tau_n$ is the holding time since the previous event,
$i_n$ is the player index ($i_n = 0$ is nature),
$a_n$ is the action taken by player $i_n$,
$k_n$ denotes the state at the time of the event, and
$k_n'$ denotes the state immediately after.
Let $g(\tau; \lambda)$ and $G(\tau; \lambda)$ denote the pdf and cdf of $\Exponential(\lambda)$.
Let $\ell_n(\theta)$ denote the likelihood of observation $n$:
\begin{equation*}
  \ell_n(\theta)
  = \underbrace{g(\tau_n; q(k_n,k_n;\theta))}_{\text{Arrival time}}
  \left[
    \underbrace{\frac{q_0(k_n,k_n;\theta)}{q(k_n,k_n;\theta)}}_{\text{Event is jump}} \cdot
    \underbrace{p(k_n, k_n'; \theta)}_{\text{Transition}}
  \right]^{\1 \lbrace i_n = 0 \rbrace}
  \left[
    \underbrace{\frac{q_\Nplayers(k_n,k_n;\theta)}{q(k_n,k_n;\theta)}}_{\text{Event is move}} \cdot
    \underbrace{\sigma(i_n, a_n, k_n; \theta)}_{\text{CCP}}
  \right]^{\1 \lbrace i_n > 0 \rbrace}.
\end{equation*}
Here, $q(k,k';\theta)$ denotes the absolute value of the $(k,k')$ element of $Q(\theta)$ for given parameters $\theta$.
$q_0(k,k';\theta)$ and $q_\Nplayers(k,k';\theta)$ similarly denote elements of $Q_0$ and $\sum_{i = 1}^\Nplayers Q_i$.
$p(k,k';\theta)$ denotes the probability of a jump from $k$ to $k'$ conditional on a jump occurring.
The full log-likelihood of the sample of $\bar{N}$ observations on $[0,T]$ is
\begin{equation*}
  \ln L_{\bar{N}}^{\text{CT}}(\theta) = \sum_{n=1}^{\bar{N}} \ln \ell_n(\theta) + \ln \left[ 1 - G(T-t_{\bar{N}},q(k_{\bar{N}},k_{\bar{N}};\theta)) \right].
\end{equation*}
The final term is the probability of not observing an event on $(t_{\bar{N}}, T]$.

With discrete-time data sampled at equispaced intervals $\Delta$ our sample consists of a collection of states $\lbrace k_{1}, \dots, k_{\bar{N}} \rbrace$ with $\bar{N}$ observations.
The likelihood function is given by:
\begin{equation*}
\ln L_{\bar{N}}^{\text{DT}}(\theta) = \sum_{n=2}^{\bar{N}} \ln P\left(k_{n-1}, k_{n}; \Delta, \theta \right),
\end{equation*}
where $P(k,l; \Delta, \theta)$ denotes the $(k,l)$ element of the transition matrix induced by $\theta$.

\subsection{Single Agent Renewal Model: Empirical Results}

We consider the single-agent binary choice (bus engine replacement) model.
We first estimate a continuous time version of the model using the same data
\cite{rust87optimal} used to estimate the original discrete time model.
We then use the estimates to calibrate parameters for Monte Carlo experiments.

The model allows for heterogeneity in move arrival rates across states.
We begin with a simpler model with a constant rate, $\lambda_k = 1$ for
all $k$, as in ABBE.
Next, we allow $\lambda$ to vary freely and estimate it.
Finally, we estimate the model with heterogeneous decision rates where
$\lambda_{\text{L}}$ is the rate for
buses with mileage states $k = 1, 2, \dots, \floor{K / 2}$ and
$\lambda_{\text{H}}$ is the rate for mileage states
$k = \floor{K / 2} + 1, \dots, K$.
The parameters to estimate are
$\theta = (\lambda_{\text{L}}, \lambda_{\text{H}}, \gamma, \beta, \mu)$,
which include the move arrival rates,
the rate of mileage increase $\gamma$,
the mileage cost parameter $\beta$,
and the engine replacement cost $\mu$.

The dataset consists of monthly bus mileage recordings and the months of bus
engine replacement.
We provide statistics on the number of observations per bus group in
Table~\ref{tab:mc1p:rustct:data}.
The smallest time horizon per bus was 24 months (2 years) for group 1.
The longest was 125 months ($\approx{}10$ years) for groups 5--9.
As reported by \cite{rust87optimal}, engine replacement occurred on average after 5
years (60 months) at over 200,000 miles.
This provides a basis for interpreting hazard rates in the continuous time model, where
the dataset consists of monthly observations spaced at equal intervals $\Delta = 1$.

\begin{table}[tbh]
  \centering
  \begin{tabular}{lrrrrr}
    \hline
    \hline
    Bus       &       & Months  &  \\
    Group     & Buses & Per Bus & Bus-Months \\
    \hline
    1         &  15 &  24 &    360 \\
    2         &   4 &  48 &    192 \\
    3         &  48 &  69 &  3,312 \\
    4         &  37 & 116 &  4,292 \\
    5         &  12 & 125 &  1,500 \\
    6         &  10 & 125 &  1,250 \\
    7         &  18 & 125 &  2,250 \\
    8         &  18 & 125 &  2,250 \\
    \hline
    Total     &  162 & -- & 15,406 \\
    \hline
  \end{tabular}
  \caption{\cite{rust87optimal} Sample Characteristics}
  \label{tab:mc1p:rustct:data}
\end{table}

We use the full-solution maximum likelihood approach to estimate the
model.
We fixed the discount rate at $\rho = 0.05$ and
the number of mileage states at $K = 90$.
The value functions are obtained through value function iteration for
each value of $\theta$ in an inner loop to within a tolerance of
$\epsilon_V = 10^{-16}$ under the relative supremum norm.
We maximized the likelihood function in an outer loop using the
L-BFGS-B algorithm \citep{byrd95limited, zhu97algorithm} with central
finite difference derivatives with an adaptive step size proportional to $\epsilon_d^{1/3}$,
with $\epsilon_d = 10^{-8}$.
For robustness to local optima, we took the estimates to be the parameter
values which achieved the highest likelihood over 20 random starting
values.

Although this approach is straightforward, it is computationally intensive.
For each iteration, it requires solving the fixed
point problem for each trial value of $\theta$ and for small steps in the
direction of each component of $\theta$. Alternative methods, such as the NFXP
approach by \cite{rust87optimal}, utilize analytical derivatives of the Bellman
operator to compute analytical derivatives of the log-likelihood function.
Combined with the BHHH algorithm \citep{bhhh-1974}, which approximates the
Hessian of the log-likelihood via the outer product of scores using
the information matrix identity, this provides substantial computational
savings in models with many parameters.
Although the Bellman operator in continuous time is
differentiable, this requires computing analytical derivatives of
the matrix exponential with respect to individual components of the
matrix argument.
These methods are computationally expensive, involving
truncation of infinite sums, evaluation of numerical integrals, or
eigenvalue decompositions of high-dimensional matrices
\citep{magnus-pijls-sentana-2021}.
On the other hand, Newton-Kantorovich-type methods could have particular
advantages in the context of continuous time models, where the $Q$ matrix
is typically very sparse, leading to sparse derivatives of the Bellman
operator.

The estimated structural parameters and standard errors
are reported in Table~\ref{tab:mc1p:rustct:estimates}.
The first column of results corresponds to the model where we
hold fixed $\lambda = 1$ (i.e., $\lambda_{\text{H}} = \lambda_{\text{L}} = 1$).
In this model, the manager is assumed to make decisions on average
once per month, corresponding to the timing of decisions in a discrete time model.
The second column contains estimates for the model where we
allow $\lambda$ to vary and estimate it (i.e., $\lambda = \lambda_{\text{H}} = \lambda_{\text{L}}$).
The final column reports estimates for the heterogeneous version
of the model where $\lambda_{\text{H}}$ may differ from $\lambda_{\text{L}}$.

In the variable $\lambda$ model, we can see that the estimated decision rates are quite different from 1.
Therefore, this provides an interesting setting in which to compare the estimated costs and differences in interpretation.
The variable $\lambda$ specification indicates a relatively low rate of monitoring, with $\hat\lambda = 0.032$ (vs. $\lambda = 1$), but a higher cost of mileage, $\hat\beta = -1.257$ (vs. $\hat\beta = -0.533$).
When $\lambda$ is constrained to 1 (forcing monthly decision-making), the model compensates by estimating a lower mileage cost to match the observed replacement timing.
A manager who checked frequently but with high mileage costs would replace the engine more often than observed in the data.

In the heterogeneous model, there appears to be a slight decrease in the estimated rate of monitoring in lower mileage states, with $\hat\lambda_{\text{L}} = 0.022$ for lower mileage states as compared to $\hat\lambda_{\text{H}} = 0.033$ in high mileage states.
The estimated cost of mileage is $\hat\beta = -1.711$ and the cost of replacement is $\hat\mu = -9.643$.

To choose between these three nested specifications, we carry out likelihood ratio tests of the null hypotheses of homogeneity, $H_0: \lambda_{\text{H}}=\lambda_{\text{L}}$, and decision rates on average equal to monthly decisions in the discrete time model, $H_0: \lambda = 1$.
We fail to reject the homogeneity restriction $\lambda_{\text{H}} = \lambda_{\text{L}}$, but strongly reject the specification with $\lambda = 1$.
It appears to be important to let the rate of decisions vary as a parameter to be estimated, but perhaps they are constant across mileage states in this setting.

\begin{table}[tbh]
  \centering
  \begin{tabular}{lrcrcrc}
    \hline
    \hline
    & \multicolumn{2}{c}{Fixed $\lambda = 1$} & \multicolumn{2}{c}{Variable $\lambda$} & \multicolumn{2}{c}{Heterogeneous $\lambda$} \\
                                           & Est.      & S.E.    & Est.      & S.E.    & Est.      & S.E.    \\
    \hline
    Decision rate   ($\lambda$)            & 1.000     & --      & 0.032     & (0.005) & --        & --      \\
    Decision rate 1 ($\lambda_{\text{L}}$)   & --        & --      & --        & --      & 0.022     & (0.004) \\
    Decision rate 2 ($\lambda_{\text{H}}$)   & --        & --      & --        & --      & 0.033     & (0.005) \\
    Mileage increase ($\gamma$)            & 0.526     & (0.006) & 0.526     & (0.006) & 0.526     & (0.006) \\
    Mileage cost ($\beta$)                 & -0.533    & (0.052) & -1.257    & (0.285) & -1.711    & (0.493) \\
    Replacement cost ($\mu$)               & -8.081    & (0.393) & -8.072    & (1.345) & -9.643    & (2.189) \\
    \hline
    Log likelihood                         & -13947.55 &         & -13938.51 &         & -13937.66 &         \\
    Observations                           &     15406 &         &     15406 &         &     15406 &         \\
    \hline
    Test for $H_0: \lambda_{\text{L}} = \lambda_{\text{H}}=1$               &            &         &           &         &           &        \\
    \quad LR                   &         -- &         &     18.08 &         &     19.78 &         \\
    \quad $p$-value           &         -- &         &   0.00002 &         &   0.00005 &         \\
    Test for $H_0: \lambda_{\text{L}} = \lambda_{\text{H}}$                 &            &         &           &         &           &        \\
    \quad LR          &         -- &         &        -- &         &      1.70 &         \\
    \quad $p$-value   &         -- &         &        -- &         &    0.1923 &         \\
    \hline
  \end{tabular}
  \caption{Model Estimates Based on Data from \cite{rust87optimal}}
  \label{tab:mc1p:rustct:estimates}
\end{table}

\subsection{Single Agent Renewal Model: Monte Carlo}

Inspired by the estimates above, we conducted a Monte Carlo experiment
using the model with true parameters specified as follows:
$(\lambda_{\text{L}}, \lambda_{\text{H}}, \gamma, \beta, \mu) = (0.05, 0.10, 0.5, -2.0, -9.0)$.
We also report estimates of the cost ratio $\mu/\beta = 4.5$ which, as is common in discrete choice models, is more precisely
estimated in most specifications than
$\beta$ or $\mu$ individually.

In the Monte Carlo, we estimate the model under several different
sampling regimes including full continuous-time data and discrete time
data sampled at short and long intervals $\Delta = 1$ and
$\Delta = 8$.
Recall that in the real dataset, $\Delta = 1$ corresponds to a time
period of one month.
In the simulation, we can interpret $\Delta = 8$ as observing the
manager's decision only once every 8 months.
We simulate data over a fixed time interval $[0,T]$ with $T = 120$ months for
each of $M$ markets, with $M$ varying from $200$ to $3,200$.
Recall from Table~\ref{tab:mc1p:rustct:data} that the maximum time
horizon was $T = 125$, so our simulation time horizon is slightly
shorter.
Similarly, in the actual dataset we observed $M = 162$ buses.
Our simulated small sample size is $M = 200$, and we increase that
to $M = 800$ and then $M= 3200$ to evaluate the large sample properties
of the estimator.

\begin{table}[tbh]
 \centering
 \caption{Single Agent Renewal Model Monte Carlo Results}
 \label{tab:mc1p:fixed_t:nfxp}
 \begin{tabular}{rllrrrrrr}
   \hline
   \hline
   $M$   & Sampling        &      & $\lambda_{\text{L}}$ & $\lambda_{\text{H}}$ & $\gamma$  & $\beta$ & $\mu$    & $\mu/\beta$ \\
   \hline
   $\infty$ & DGP          & True & 0.050 & 0.100 & 0.500 & -2.000 & -9.000 & 4.500 \\
   \hline
     200     & Continuous      & Mean    &       0.050  &       0.100  &       0.500  &      -2.050  &      -9.178  &       4.503  \\
             &                 & S.D.    &       0.007  &       0.008  &       0.004  &       0.310  &       1.096  &       0.228  \\
     200     & $\Delta = 1.00$ & Mean    &       0.051  &       0.100  &       0.508  &      -2.079  &      -9.235  &       4.467  \\
             &                 & S.D.    &       0.007  &       0.008  &       0.004  &       0.317  &       1.117  &       0.222  \\
     200     & $\Delta = 8.00$ & Mean    &       0.051  &       0.100  &       0.508  &      -2.093  &      -9.284  &       4.471  \\
             &                 & S.D.    &       0.009  &       0.009  &       0.005  &       0.374  &       1.281  &       0.265  \\
     \hline
     800     & Continuous      & Mean    &       0.050  &       0.100  &       0.500  &      -1.988  &      -8.957  &       4.510  \\
             &                 & S.D.    &       0.003  &       0.005  &       0.002  &       0.121  &       0.427  &       0.099  \\
     800     & $\Delta = 1.00$ & Mean    &       0.051  &       0.101  &       0.508  &      -2.011  &      -8.999  &       4.479  \\
             &                 & S.D.    &       0.003  &       0.005  &       0.002  &       0.124  &       0.433  &       0.098  \\
     800     & $\Delta = 8.00$ & Mean    &       0.051  &       0.100  &       0.508  &      -2.018  &      -9.020  &       4.474  \\
             &                 & S.D.    &       0.003  &       0.005  &       0.003  &       0.145  &       0.498  &       0.108  \\
     \hline
    3200     & Continuous      & Mean    &       0.050  &       0.100  &       0.500  &      -1.995  &      -8.999  &       4.511  \\
             &                 & S.D.    &       0.002  &       0.002  &       0.001  &       0.072  &       0.238  &       0.060  \\
    3200     & $\Delta = 1.00$ & Mean    &       0.051  &       0.100  &       0.508  &      -2.014  &      -9.025  &       4.484  \\
             &                 & S.D.    &       0.002  &       0.002  &       0.001  &       0.072  &       0.233  &       0.061  \\
    3200     & $\Delta = 8.00$ & Mean    &       0.051  &       0.100  &       0.508  &      -2.009  &      -9.004  &       4.485  \\
             &                 & S.D.    &       0.002  &       0.002  &       0.001  &       0.075  &       0.244  &       0.064  \\
     \hline
 \end{tabular}                                                              \\
 \medskip
 \begin{footnotesize}
   The mean and standard deviation are reported for
   100 replications under several sampling regimes.
   For each replication, $M$ markets were simulated over a fixed time
   interval $[0,T]$ with $T = 120$.
 \end{footnotesize}
\end{table}

For each specification, we report the mean and standard deviation
of the parameter estimates over 100 replications in \autoref{tab:mc1p:fixed_t:nfxp}.
With the smallest sample size, $M=200$, although the rate parameters
$\lambda_{\text{L}}$, $\lambda_{\text{H}}$, and $\gamma$ are quite precisely
estimated in all cases---even with a long time interval between discrete
time observations---the cost parameters $\beta$ and $\mu$ are overestimated.
However, even still, they are overestimated in a way such that the
ratio $\mu / \beta$ is close to the true value.
In large samples---as we increase the sample size to $M = 800$ and
$M = 3200$---all are parameters are estimated quite precisely and with
little bias.
The loss of precision, measured by the standard deviations of the
parameter estimates, is minimal when moving from continuous-time data
to discrete-time data with $\Delta = 1$, but it becomes more
noticeable at $\Delta = 8$.

\section{Conclusion}
\label{sec:conclusion}

In this paper, we developed new results on the theoretical and
econometric properties of a generalized instance of the empirical
framework introduced by \citet*{abbe-2016} for continuous time dynamic
discrete choice games.
We showed that the rate of move arrivals can be identified, even in models
where it is heterogeneous, depending on the player identity or state,
whereas previously it was assumed to be known.
We established equilibrium existence
with heterogeneous players and state-dependent move arrival rates,
developed conditions for identification
with discrete time data in the more general model,
explored these results in the context of
three canonical examples widely used in applied work,
and examined the computational properties of the model
as well as the finite- and large-sample
properties of estimates through a series of small- and large-scale Monte
Carlo experiments based on familiar models.

\appendix

\section{A Continuous-Time Quality Ladder Model of Oligopoly Dynamics}
\label{sec:examples:ladder}

To illustrate the application to dynamic games used in empirical industrial
organization we consider a discrete control
version of the quality ladder model proposed by \cite{ericson95markov-perfect}.
This model has been examined extensively by
\cite{pakes94computing,pakes01stochastic},
\cite{doraszelski-satterthwaite-2010}, \cite{doraszelski07framework},
and others.
The model consists of at most $\Nplayers$ firms who compete in a single product
market.
The products are differentiated in that the product of firm $i$ at time $t$
has some quality level $\omega_{it} \in \Omega$, where
$\Omega = \lbrace 1, 2, \dotsc, \bar\omega, \bar\omega +1 \rbrace$
is the finite set of possible quality levels, with $\bar\omega + 1$
being a special state for inactive firms.
Firms with $\omega_{it} < \bar\omega + 1$ are incumbents.
In contrast to \cite{pakes94computing}, all controls here are discrete:
given a move arrival, firms choose whether or not to invest to move up the
quality ladder, rather than how much to spend to increase their chances of doing so.

We consider the particular example of price competition with a single
differentiated product where firms make entry, exit, and investment
decisions, however, the quality ladder framework is quite general and
can be easily adapted to other settings.
For example,
\cite{doraszelski07advertising}
use this framework in a model of advertising where, as above,
firms compete in a differentiated product market by setting prices,
but where the state $\omega_{it}$ is the share of consumers who are
aware of firm $i$'s product.
\cite{gowrisankaran99dynamic} develops a model of endogenous
horizontal mergers where $\omega_{it}$ is a capacity level
and the product market stage game is Cournot with a given demand
curve and cost functions that enforce capacity constraints
depending on each firm's $\omega_{it}$.

To allow for firm and state heterogeneity in move arrival rates,
we may think that some firms monitor the market more frequently in
some states than others, and thus have a higher move arrival rate
$\lambda_{ik}$.
We will suppose that the frequency of monitoring is related to the
quality of the firm's product.
We assume that firms with endogenously higher product quality monitor
the market more frequently than those with lower product quality
and/or potential entrants.
We define ``high product quality'' as $\omega_{it} \geq \omega^{\text{h}}$.
Therefore, we assume that
$\lambda_{ik} = \lambda_{\text{L}}$ for incumbents with $\omega_{it} < \omega^{\text{h}}$
and for potential entrants with $\omega_{it} = \bar\omega + 1$,
while $\lambda_{ik} = \lambda_{\text{H}}$
for incumbents with $\omega_{it} \geq \omega^{\text{h}}$.

\subsection{State Space Representation}

We make the usual assumption that firms are symmetric and
anonymous.
That is, the primitives of the model are the same for each
firm and only the distribution of firms across states,
not the identities of those firms, is payoff-relevant.
By imposing symmetry and anonymity, the size of the state space can be reduced
from the total number of distinct market structures,
$(\bar\omega + 1)^\Nplayers$,
to the number of possible distributions of $\Nplayers$ firms
across $\bar\omega + 1$ states.\footnote{
In practice, we use the ``probability density space'' encoding algorithm
described in \cite{gowrisankaran99efficient}, to map market structure
tuples $s \in \mathcal{S}$ to integers $x \in \mathcal{X}$.}
The set of relevant market configurations is thus the set of ordered
tuples of length $\bar\omega + 1$ whose elements sum to
$\Nplayers$, denoted
$\mathcal{S} = \{ (s_1,\dotsc,s_{\bar\omega + 1}): \sum_j s_j = \Nplayers, s_j \in \Z^{*} \}$,
where $\Z^{*}$ is the set of nonnegative integers.
In this notation, each vector
$\omega = (\omega_1, \dots, \omega_\Nplayers) \in \Omega^\Nplayers$
maps to an element
$s = (s_1,\dots,s_{\bar\omega+1}) \in \mathcal{S}$
with $s_j = \sum_{i=1}^\Nplayers \1\lbrace \omega_i = j \rbrace$
for each $j$.

Each firm also needs to track its own quality, so
payoff relevant market configurations from the perspective of firm $i$
are described by a tuple $(\omega_i,s) \in \Omega \times \mathcal{S}$,
where $\omega_i$ is firm $i$'s quality level and
$s$ is the market configuration.
For our implementation, we map the multidimensional space $\Omega \times \mathcal{S}$
to an equivalent one-dimensional state space
$\mathcal{K} = \lbrace 1, \dotsc, \abs{\Omega} \times \abs{\mathcal{S}} \rbrace$
so that we can represent quantities in matrix-vector form
and we use pre-computed transition addresses to avoid
re-computing continuation states.

\subsection{Product Market Competition}

Again, we follow \cite{pakes94computing} in assuming a continuum of
consumers with measure $\bar{M} > 0$ and that each consumer's utility from
choosing the good produced by firm $i$ is
$g(\omega_i) - p_i + \varepsilon_i$, where $\varepsilon_i$
is iid across firms and consumers and follows a type I extreme value
distribution.
The $g$ function is used to enforce an upper bound on profits.
As in \cite{pakes93implementing}, for some constant $\omega^*$
we define
\begin{equation*}
 g(\omega_i) = \begin{cases}
   \omega_i & \text{if } \omega_i \leq \omega^*, \\
   \omega_i - \ln(2 - \exp(\omega^* - \omega_i)) & \text{if } \omega_i > \omega^*.
 \end{cases}
\end{equation*}

Let $\mathcal{s}_i(\omega, p)$ denote firm $i$'s market share given the
state $\omega$ and prices $p$.
From \cite{mcfadden74conditional}, we know that the share of consumers
purchasing good $i$ is
\begin{equation*}
 \mathcal{s}_i(\omega, p) =
 \frac{\exp(g(\omega_i) - p_i)}%
 {1 + \sum_{j=1}^\Nplayers \exp(g(\omega_j) - p_j)}.
\end{equation*}
In a market of size $\bar{M}$, firm $i$'s demand is
$q_i(\omega, p) = \bar{M} \mathcal{s}_i$.

All firms have the same constant marginal cost $c \geq 0$.
Taking the prices of other firms, $p_{-i}$, as given,
the profit maximization problem of firm $i$ is
\begin{equation*}
 \max_{p_i \geq 0} q_i(p, \omega) (p_i - c).
\end{equation*}
\cite{caplin91aggregation} show that (in this single-product firm setting)
there is a unique Bertrand-Nash
equilibrium, which is given by the solution to the first order conditions:
\begin{equation*}
 \frac{\partial q_i}{\partial p_i}(p, \omega) (p_i - c)
 + q_i(p, \omega) = 0.
\end{equation*}
Given the functional forms above, the first order conditions become
\begin{equation*}
 -(p_j - c)(1 - \mathcal{s}_j) + 1 = 0.
\end{equation*}
We solve this nonlinear system of equations numerically using the
Newton-Raphson algorithm to obtain the equilibrium prices and
the implied profits
$\pi(\omega_i, \omega_{-i}) = q_i(p, \omega) (p_i - c)$
earned by each firm $i$ in each state $(\omega_i, \omega_{-i})$.

\subsection{Incumbent Firms}

We consider a simple model in which incumbent firms have three
choices.
Firms may continue without investing at no cost,
they may invest an amount $\kappa$ in order to increase the quality of
their product from $\omega_i$ to
$\omega_i' = \min \lbrace \omega_i + 1, \bar\omega \rbrace$,
or they may exit the market and receive some scrap value $\phi$.
We denote these choices, respectively, by the choice set
$\mathcal{J} = \lbrace 0, 1, 2 \rbrace$.
When an incumbent firm exits the market,
$\omega_i$ jumps deterministically to $\bar\omega + 1$.
Associated with each choice $j$ is a private shock
$\varepsilon_{ijk}$.
These shocks are iid over firms, states, choices, and time and
follow a standard type I extreme value distribution
(Assumption~\ref{assp:tiev}).

For any market-wide state $k \in \mathcal{K}$, let
$\omega_k = (\omega_{k1}, \dotsc, \omega_{k\Nplayers})$ denote
the corresponding market configuration in $\Omega^\Nplayers$.
In the general notation introduced above, the instantaneous payoff
$\psi_{ijk}$ to an incumbent firm $i$ from choosing choice $j$ in state $k$ is
\begin{equation*}
 \psi_{ijk} = \begin{cases}
   0 & \text{if } j = 0, \\
   -\kappa & \text{if } j = 1, \\
   \phi & \text{if } j = 2.
 \end{cases}
\end{equation*}

The state resulting from continuing ($j = 0$) is simply
$l(i,0,k) = k$.
Similarly, for investment ($j = 1$),
$l(i,1,k) = k'$
where state $k'$ is the element of $\mathcal{X}$ such that
$\omega_{k'i} = \min\{ \omega_{ki} + 1, \bar\omega \}$ and
$\omega_{k'm} = \omega_{km}$ for all firms $m \neq i$.
Note that we are considering only incumbent firms with
$\omega_{ki} < \bar\omega + 1$.
Exiting is a terminal action with an instantaneous payoff but no
continuation value.

Each incumbent firm pays a constant flow fixed cost $\mu$ while
remaining in the market, and receives the flow profits $\pi_{ik} = \pi(\omega_{ki}, \omega_{k,-i})$
associated with product market competition.
The value function for an incumbent firm in state $k$ is thus
\begin{multline*}
 V_{ik} = \frac{1}{\rho + \sum_{l \neq k} q_{kl} + \sum_{m=1}^{\Nplayers} \lambda_{mk}}\left(
   \pi_{ik} - \mu
   + \sum_{l \neq k} q_{kl} V_{il}
   + \sum_{m \neq i} \lambda_{mk} \sum_j \sigma_{mjk} V_{i,l(m,j,k)}
   \right. \\ \left.
   + \lambda_{ik} \E \max\left\{
       V_{ik} + \varepsilon_{i0k},
       V_{i,l(i,1,k)} - \kappa + \varepsilon_{i1k},
       \phi + \varepsilon_{i2}
     \right\}
\right).
\end{multline*}
Conditional upon moving while in state $k$, incumbent firms face the
maximization problem
$\max \left\lbrace
   V_{ik} + \varepsilon_{i0},
   -\kappa + V_{ik'} + \varepsilon_{i1},
   \phi + \varepsilon_{i2}
 \right\rbrace.$
The resulting choice probabilities are
\begin{gather*}
 \sigma_{i0k} = \frac{\exp(V_{ik})}{\exp(V_{ik}) + \exp(-\kappa + V_{ik'}) + \exp(\phi)}, \\
 \sigma_{i1k} = \frac{\exp(-\kappa + V_{ik'})}{\exp(V_{ik}) + \exp(-\kappa + V_{ik'}) + \exp(\phi)}, \\
 \sigma_{i2k} = 1 - \sigma_{i0k} - \sigma_{i1k},
\end{gather*}
where, as before, $k' = l(i,1,k)$ denotes the resulting state after
investment by firm $i$.

\subsection{Potential Entrants}

Whenever the number of incumbents is smaller than $\Nplayers$,
a single potential entrant receives the opportunity to enter at rate
$\lambda_{\text{L}}$.
Potential entrants are short-lived and do not consider the option
value of delaying entry.
If firm $i$ is a potential entrant with the opportunity to move it has two
choices: it can choose to enter ($j = 1$), paying a setup cost
$\eta$ and entering the market immediately in a predetermined
entry state $\omega^e \in \Omega$
or it can choose not to enter ($j = 0$) at no cost.
Associated with each choice $j$ is a stochastic private payoff shock
$\varepsilon_{ijk}$.
These shocks are iid across firms, choices, and time, and are
distributed according to the type I extreme value distribution
(Assumption~\ref{assp:tiev}).

In our general notation,
for actual entrants
($j = 1$) in state $k$ the instantaneous payoff is
$\psi_{i1k} = -\eta$ and the continuation state is
$l(i,1,k) = k'$ where $k'$ is the element of $\mathcal{K}$ with
$\omega_{k'i} = \omega^e$
and $\omega_{k'm} = \omega_{km}$ for all $m \neq i$.
For firms that choose not to enter ($j = 0$) in state $k$, we have
$\psi_{i0k} = 0$ and the firm leaves the market with no continuation
value.
Thus, upon moving in state $k$, a potential entrant
faces the problem
\begin{equation*}
 \max\left\lbrace
   \varepsilon_{i0k},
   -\eta + V_{ik'} + \varepsilon_{i1k}
 \right\rbrace
\end{equation*}
yielding the conditional entry-choice probabilities
\begin{equation*}
 \sigma_{i1k} = \frac{\exp(V_{ik'} - \eta)}{1 + \exp(V_{ik'} - \eta)}.
\end{equation*}

\subsection{State Transitions}

In addition to state transitions resulting directly from
entry, exit, or investment decisions, the overall state of the market
follows a Markov jump process.
At rate $\gamma$, the quality of each firm $i$ jumps from
$\omega_i$ to $\omega_i' = \max\lbrace \omega_i - 1, 1 \rbrace$.
This represents an industry-wide negative demand shock,
which can be interpreted as an improvement in the outside alternative.

\subsection{Monte Carlo Experiments}

To complement the main Monte Carlo experiments in Section~\ref{sec:mc},
here we carry out another set of simulations using the quality ladder model
described above.
Table~\ref{tab:timing} provides an overview of the model
specifications.
The table covers models with player counts ranging from $\Nplayers =
2$ with $K=56$ states to $\Nplayers = 30$ with $K=58,433,760$ states.
We keep the number of possible quality levels fixed at $\bar{\omega} = 7$.
For simplicity, the quality level threshold for the decision rate is
set to match the entry-level quality, with $\omega^{\text{h}} = \omega^e = 4$.
As the number of potential players ($\Nplayers$) increases, we adjust
the market size ($\bar{M}$) to ensure that the average number of
active players ($n_{\text{avg}}$) grows accordingly.
Additionally, we report $K$, the number of distinct $(\omega_i,\omega)$
state combinations in $\mathcal{X}$, from the perspective of player $i$.

\begin{table}[htbp]
 \centering
 \caption{Quality Ladder Model Monte Carlo Specifications}
 \label{tab:timing}
 \begin{tabular}{rrrrr}
   \hline
   \hline
   $\Nplayers$ & $\bar\omega$ & $K$        & $\bar{M}$   & Obtain $V$ \\
   \hline
    2          & 7            & 56         & 0.40        & 0.15       \\
    4          & 7            & 840        & 0.60        & 0.27       \\
    6          & 7            & 5,544      & 0.75        & 0.65       \\
    8          & 7            & 24,024     & 0.85        & 3          \\
   10          & 7            & 80,080     & 0.95        & 10         \\
   12          & 7            & 222,768    & 1.05        & 30         \\
   14          & 7            & 542,640    & 1.15        & 79         \\
   16          & 7            & 1,193,808  & 1.20        & 199        \\
   18          & 7            & 2,422,728  & 1.25        & 422        \\
   20          & 7            & 4,604,600  & 1.30        & 882        \\
   22          & 7            & 8,288,280  & 1.35        & 1648       \\
   24          & 7            & 14,250,600 & 1.40        & 2964       \\
   26          & 7            & 23,560,992 & 1.45        & 6481       \\
   28          & 7            & 37,657,312 & 1.50        & 10804      \\
   30          & 7            & 58,433,760 & 1.55        & 17712      \\
   \hline
 \end{tabular} \\
 \medskip
 \begin{footnotesize}
   $\Nplayers$ denotes the number of players (including potential entrants),
   $\bar\omega$ denotes the number of quality levels,
   $K$ denotes the total number of distinct states,
   $\bar{M}$ denotes the market size, and
   ``Obtain $V$'' denotes the time in seconds required for value iteration convergence.
   Computational times are wall clock times using GNU Fortran 12.2 on a 2019 Mac Pro with a 2.5 GHz 28-Core Intel Xeon W processor.
 \end{footnotesize}
\end{table}

The final column of Table~\ref{tab:timing} compares the computational
time required (wall clock time) for obtaining the value function
across specifications.
This step is necessary to either generate a dataset or to simulate the
model (e.g., to perform counterfactuals).
We used value function iteration where the stopping criterion is that
the choice probabilities have converged to within a tolerance of
$\epsilon_\sigma = 10^{-8}$ in the supremum norm.

To put the computational times in perspective,
\cite{doraszelski-judd-2012} noted that it would take about \emph{one year} to
just solve for an equilibrium of a comparable\footnote{The times
  reported by \cite{doraszelski-judd-2012} are for a model with
  $\bar{\omega} = 9$ but with no entry or exit, which for a fixed
  value of $\Nplayers$, is roughly comparable in terms of dimensionality
  to our model with $\bar{\omega} = 7$, which includes entry and exit.}
14-player game using the Pakes-McGuire algorithm.
Similar computational times are reported in \cite{doraszelski07framework}.
However, it takes just over \emph{one minute} to solve the continuous-time game
with 14 players and 542,640 states.
Even in the game with 30 players and over 58 million states,
obtaining the value function took under 5 hours.
We note that this would still render full solution estimation infeasible,
but when estimating the model using two-step methods such as in ABBE
or \cite{ctnpl}, one may only need to carry out this step once, after
estimation, for simulating a counterfactual.
Overall, these computational times suggest that a much larger class of
problems can be estimated and simulated in the continuous-time
framework.

\autoref{tab:ladder:nfxp} summarizes the results of our Monte Carlo
experiments.
We estimate the structural parameters
$(\lambda_{\text{L}}, \lambda_{\text{H}}, \gamma, \kappa, \eta, \mu)$.
The true parameter values, which are also shown in the table, are
$(\lambda_{\text{L}}, \lambda_{\text{H}}, \gamma, \kappa, \eta, \mu) = (1.0, 1.2, 0.4, 0.8, 4.0, 0.9)$.
Because we estimate firm fixed costs $\mu$, we set the scrap value received
upon exit equal to zero $(\phi = 0)$.

\begin{table}[tbp]
 \centering
 \caption{Quality Ladder Model Monte Carlo Results}
 \label{tab:ladder:nfxp}
 \begin{tabular}{rrllrrrrrr}
   \hline
   \hline
   $\Nplayers$ & $K$      & Sampling       &      & $\lambda_{\text{L}}$ & $\lambda_{\text{H}}$ & $\gamma$ & $\kappa$ & $\eta$ & $\mu$ \\
   \hline
               &          & DGP            & True & 1.000 & 1.200 & 0.400 & 0.800 & 4.000 & 0.900 \\
   \hline
    2          &       56 & Continuous     & Mean & 0.997 & 1.196 & 0.400 & 0.796 & 3.988 & 0.899 \\
               &          &                & S.D. & 0.015 & 0.020 & 0.010 & 0.032 & 0.137 & 0.021 \\
               &          & $\Delta = 1.0$ & Mean & 1.021 & 1.223 & 0.399 & 0.801 & 3.932 & 0.914 \\
               &          &                & S.D. & 0.177 & 0.181 & 0.007 & 0.283 & 0.841 & 0.063 \\
    4          &      840 & Continuous     & Mean & 0.999 & 1.199 & 0.397 & 0.806 & 4.030 & 0.897 \\
               &          &                & S.D. & 0.013 & 0.018 & 0.014 & 0.033 & 0.160 & 0.022 \\
               &          & $\Delta = 1.0$ & Mean & 0.998 & 1.197 & 0.400 & 0.781 & 3.948 & 0.902 \\
               &          &                & S.D. & 0.114 & 0.113 & 0.006 & 0.180 & 0.456 & 0.040 \\
    6          &    5,544 & Continuous     & Mean & 1.001 & 1.198 & 0.399 & 0.798 & 4.012 & 0.900 \\
               &          &                & S.D. & 0.014 & 0.018 & 0.016 & 0.035 & 0.144 & 0.021 \\
               &          & $\Delta = 1.0$ & Mean & 1.004 & 1.207 & 0.399 & 0.805 & 4.017 & 0.901 \\
               &          &                & S.D. & 0.087 & 0.088 & 0.006 & 0.135 & 0.330 & 0.032 \\
    8          &   24,024 & Continuous     & Mean & 1.000 & 1.201 & 0.400 & 0.802 & 4.027 & 0.899 \\
               &          &                & S.D. & 0.013 & 0.017 & 0.018 & 0.033 & 0.149 & 0.023 \\
               &          & $\Delta = 1.0$ & Mean & 1.012 & 1.213 & 0.400 & 0.814 & 4.030 & 0.905 \\
               &          &                & S.D. & 0.082 & 0.083 & 0.005 & 0.125 & 0.292 & 0.030 \\
   \hline
 \end{tabular}
\end{table}

We first used samples containing $\bar{N} = 10,000$ continuous time events.
In this case, we observe the time of each event, the identity of the
player, and the action chosen.
For each specification, we also report results for
estimation with discrete time data with a fixed sampling interval of
$\Delta = 1$ and $\bar{N} = 10,000$ observations.
In this case, we must calculate the matrix exponential of the $Q$
matrix at each trial value of $\theta$.
To do so, we use the uniformization algorithm as described in \cite{sherlock21direct}.
Because this matrix is high dimensional, but sparse, we
adapted the algorithm to use sparse matrix methods, and we precomputed
the locations of the non-zero elements to improve the computational
speed.

For each replication, we used L-BFGS-B \citep{byrd95limited, zhu97algorithm}
to maximize the log-likelihood function and used
$\epsilon_\sigma = 10^{-13}$ as the tolerance for value function iteration,
checking convergence using the sup norm of the choice probabilities.%
\footnote{Because of the time required to complete many replications of each
  specification, and because the specification has undergone a revision, we
  have limited our consideration to models up to
  $\Nplayers = 8$ players and $K = 24,024$ states
  for the Monte Carlo experiments.}
As before, we use central finite difference derivatives with an adaptive
step size proportional to $\epsilon_d^{1/3}$, with $\epsilon_d = 10^{-8}$.

We took the estimates to be the parameter values which
achieved the highest likelihood over 3 distinct starting values.
Each replication involves a parameter search and each
parameter evaluation solves a full solution problem for
accuracy.\footnote{To ease the computational burden, we store up to
  100 previous value functions and associated parameter values. Then
  for each trial value of $\theta$, we search for the closest (in
  Euclidean distance) previous parameter values and use the associated
  value function as the starting value for value function iteration.}
Although this is computationally costly, it allows us to focus on
identification, computation, and estimation under time aggregation in
a setting without additional tuning parameters and two-step
estimation error.

The estimates are reasonably accurate and precise in all specifications,
including the firm heterogeneity in move arrival rates.
As expected, we can see that the precision is decreased (standard
errors are increased) in most cases due to the information lost with
only discretely sampled data.
Although the standard errors are larger than those with continuous time
data, they are still reasonably small.

\section{Proofs}
\label{sec:proofs}

\begin{proof}[Proof of \autoref{thm:existence}]

First, note that the best response condition in \eqref{eq:best_response}
is equivalent to the following inequality condition:
\begin{equation}
  \label{prule}
  \delta_i(k,\varepsilon_{ik}, \sigma_i) = j
  \iff
  \psi_{ijk} + \varepsilon_{ijk} + V_{i,l(i,j,k)}(\sigma_i)
  \geq
  \psi_{ij'k} + \varepsilon_{ij'k} + V_{i,l(i,j',k)}(\sigma_i)
  \quad \forall j' \in \mathcal{J}.
\end{equation}
Define the mapping
$\Upsilon: [0,1]^{\Nplayers \times J \times K} \to [0,1]^{\Nplayers \times J \times K}$
by stacking best response probabilities:
\begin{equation*}
  \Upsilon_{ijk}(\sigma) = \int \1\left\{
    \varepsilon_{ij'k} - \varepsilon_{ijk} \leq
    \psi_{ijk} - \psi_{ij'k} + V_{i,l(i,j,k)}(\sigma_{-i}) - V_{i,l(i,j',k)}(\sigma_{-i})
    \quad \forall j' \in \mathcal{J}_i
  \right\}\,f(\varepsilon_{ik})\,d\varepsilon_{ik}.
\end{equation*}
$\Upsilon$ is a continuous function from a compact space
onto itself, so By Brouwer's theorem, it has a fixed point.
The fixed point probabilities imply stationary Markov strategies
that constitute a Markov perfect equilibrium.
\end{proof}

\begin{proof}[Proof of \autoref{thm:v:linear}]
Given a collection of equilibrium best response probabilities
$\{ \sigma_i \}_{i=1}^\Nplayers$, we arrived at the linear operator
for the value function $V_i(\sigma_i)$ in \eqref{eq:bellman:matrix}.
As noted, Lemma~\ref{lem:ccp:inversion} guarantees that the difference
$V_{i,l(i,j,k)}(\sigma_i) - V_{i,l(i,j',k)}(\sigma_i)$
can be expressed as a function of payoffs and choice
probabilities $\sigma_i$ and so we can write $C_i$ as a function
of only conditional choice probabilities and payoffs (i.e.,
it no longer depends on the value function).

Noting that $V_i = \Gamma_i(V_i)$ and restating \eqref{eq:bellman:matrix}
to collect terms involving $V_i(\sigma_i)$ yields
\begin{equation*}
  V_i(\sigma_i) \left[ \rho_i I_k + \sum_{m=1}^\Nplayers L_m [I_K - \Sigma_m(\sigma_m)] - Q_0 \right]
  = u_i + L_i C_i(\sigma_i).
\end{equation*}
The matrix in square brackets side is strictly diagonally dominant:
for each $m$
$\rho_m > 0$ by \autoref{assp:rho},
$L_m$ is a diagonal matrix with strictly positive elements by \autoref{assp:rates},
$\Sigma_m(\sigma_m)$ has elements in $[0,1]$ with row sums equal to one,
and elements of $Q_0$ satisfy $\abs{q_{kk}} = \sum_{l \neq k} \abs{q_{kl}}$ in each row $k$.
Therefore, by the Levy-Desplanques theorem \citep[Theorem 6.1.10]{horn-johnson-1985}
this matrix is nonsingular.
\end{proof}

\begin{proof}[Proof of \autoref{thm:id_Q}]
  To establish generic identification of $Q$ we specialize the proof of
  Theorem 1 of \cite{blevins-2017} to the present setting.
  In this setting, $P(\Delta)$ is identified and is the solution to the
  Kolmogorov forward equations while $Q$ is a matrix of unknown parameters
  with $q_{kl}$ for $l \neq k$ being the hazard of jumps from state $k$
  to state $l$.
  The unique solution to this system is the transition matrix
  $P(\Delta) = \exp(\Delta Q)$, which has the same form as the matrix
  $B$ in equation (3) of \cite{blevins-2017} and $Q$ in this model is
  analogous to $A$ in (1).

  For the $K \times K$ matrix $Q = (q_{kl})$, we denote $\vectorize(Q)$
  as the vector obtained by stacking the columns of $Q$:
  $\vectorize(Q) = ( q_{11}, q_{21}, \dots, q_{K1}, \dots, q_{1K}, \dots, q_{KK} )^\top$.
  \cite{gantmacher-1959} showed that all solutions $\tilde{Q}$ to
  $\exp(\Delta \tilde{Q}) = P(\Delta)$ have the form
  $\tilde{Q} = Q + U D U^{-1}$,
  where $U$ is a matrix whose columns are the eigenvectors of $Q$ and
  $D$ is a diagonal matrix containing differences in the complex
  eigenvalues of $Q$ and $\tilde{Q}$.
  This means that both the eigenvectors $U$ and the real eigenvalues
  of $Q$ are identified.

  Any other such matrices $\tilde{Q}$ must also satisfy the prior
  restrictions, so $R \vectorize(\tilde{Q}) = r$.
  By the relationship between $Q$ and $\tilde{Q}$ above, we have
  $R \vectorize(Q + U D U^{-1}) = r$.
  But $R \vectorize(Q) = r$ and by linearity of the vectorization operator,
  $R \vectorize(U D U^{-1}) = 0$.
  An equivalent representation is
  $R (U^{-\top} \otimes U) \vectorize(D) = 0$.

  Since $Q$ is an intensity matrix with row sums equal to zero, it has one
  real eigenvalue equal to zero and at most $K-1$ complex eigenvalues.
  The vector of ones is a right eigenvector of $Q$ with zero as the
  eigenvalue.
  In this case, the number of required restrictions on $Q$ is reduced to
   $\floor{(K-1)/2}$ because we know $Q$ has at least one real eigenvalue.
  When there are at least $\floor{(K-1)/2}$ linear restrictions and $R$
  has full rank, then $D$ must be generically zero and therefore the
  eigenvalues of $\tilde{Q}$ and $Q$ are equal.
  If the eigenvectors and all eigenvalues of $\tilde{Q}$ are the same as
  those of $Q$, the matrices must be equal and therefore $Q$ is identified.

  Under the assumptions the number of distinct states in the model is $K \equiv K_0 K_1^\Nplayers$.
  Therefore, we will require at least $\floor{(K-1)/2}$ linear restrictions of the form $R \vectorize(Q) = r$ where $R$ has full rank.
  We proceed by showing that the present model admits an intensity matrix $Q$ with a known sparsity pattern and so we can use the locations of zeros as homogeneous restrictions, where $r$ will be a vector of zeros.

  Recall that each player has $J$ choices, but $j = 0$ is a continuation choice.
  This results in $J-1$ non-zero off-diagonal elements per row of $Q$ per player.
  There are at most $K_0-1$ non-zero off-diagonal elements due to exogenous state changes by nature.
  The only other non-zero elements of each row are the diagonal elements and therefore there are at least $K - \Nplayers(J-1) - (K_0 - 1) - 1 = K_0 K_1^\Nplayers - \Nplayers(J-1) - K_0$ zeros per row of $Q$.
  The order condition is that the \emph{total} number of zero restrictions is at least $\floor{(K-1)/2}$.
  For simplicity, it will suffice to show that there are $K/2 \geq \floor{(K-1)/2}$ restrictions.
  Summing across rows, this condition is satisfied when
  $(K_0 K_1^\Nplayers)(K_0 K_1^\Nplayers - \Nplayers (J-1) - K_0) \geq K_0 K_1^\Nplayers/2$.
  Simplifying yields the sufficient condition in \eqref{eq:id_Q}.

  In terms of the restrictions required by Theorem 1 of
  \cite{blevins-2017}, the restrictions we have generated all involve
  single-element zero restrictions on $\vectorize(Q)$ in distinct
  locations, therefore the restriction matrix has full rank.

  The derivative of the left-hand-side of \eqref{eq:id_Q} with respect to $K_0$ is $K_1^N - 1$.
  This value is always non-negative, since $K_1 \geq 1$, and is strictly positive when $K_1 > 1$.
  The derivative with respect to $K_1$ is $\Nplayers K_0 K_1^{\Nplayers-1}$.
  This value is always strictly positive since $K_0 \geq 1$ and $K_1 \geq 1$.
  Finally, the derivative with respect to $J$ is $-\Nplayers$.
\end{proof}

\begin{proof}[Proof of \autoref{thm:ident:v:psi:h0}]
Under Assumption~\ref{assp:tiev}, note that
$h_{ijk} = \lambda_{ik} \sigma_{ijk}$ and, recalling the choice probabilities
in \eqref{eq:ccp:logit}, differences in log hazards can be
written as
\begin{equation*}
  \ln h_{ijk} - \ln h_{i0k} = \ln \sigma_{ijk} - \ln \sigma_{i0k} = \psi_{ijk} + V_{i,l(i,j,k)} - V_{ik}.
\end{equation*}
Rearranging, we have
\begin{equation*}
  \ln h_{ijk} = \ln h_{i0k} + \psi_{ijk} + V_{i,l(i,j,k)} - V_{ik}.
\end{equation*}
The hazards on the left hand side for $j = 1, \dots, J-1$ are identified
from $Q$ under Assumption~\ref{assp:Q_i}, while the quantities on the right-hand side are unknowns to be
identified.

Define $S_{ij}$ to be the state transition matrix
induced by the continuation state function $l(i,j,\cdot)$.
In other words, $S_{ij}$ is a permutation matrix where the $(k,l)$ element
is $1$ if playing action $j$ in state $k$ results in a transition to state
$l$ and $0$ otherwise.
Stacking equations across states $k$ and choices $j$ gives:

\begin{equation}
  \label{eq:ident:v:psi:h0:X_i}
  \begin{bmatrix}
    \ln h_{i1} \\
    \vdots \\
    \ln h_{i,J-1} \\
  \end{bmatrix}
  = \left[
    \begin{array}{c|cccc|c}
      I_K    & I_K    & 0      & \dots  & 0      & S_{i1} - I_K \\
      I_K    & 0      & I_K    & \dots  & 0      & S_{i2} - I_K \\
      \vdots & \vdots & \vdots & \vdots & \vdots & \vdots       \\
      I_K    & 0      & 0      & \dots  & I_K    & S_{i,J-1} - I_K \\
    \end{array}
  \right]
  \begin{bmatrix}
    \ln h_{i0} \\
    \hline
    \psi_{i1} \\
    \vdots \\
    \psi_{i,J-1} \\
    \hline
    V_i \\
  \end{bmatrix}.
\end{equation}
Define $X_i$ to be the partitioned matrix above.
This system has $(J-1)K$ equations with $(J+1)K$ unknowns:
$K$ unknown log hazards $\ln h_{i0}$, $(J-1)K$ unknown instantaneous payoffs $\psi_{ij}$,
and $K$ unknown valuations $V_i$.

Under \autoref{assp:dn}, for any action $j > 0$ in any state $k$, the
continuation state is different from $k$.
Therefore, the diagonal elements of $S_{ij}$ are all zero and
$S_{ij} - I_K$ has full rank for each $j > 0$, and these blocks are
linearly independent across $j$.
This means that $X_i$ has rank $(J-1)K$.

The augmented system with $R_i$ and $r_i$ denoting linear restrictions on
the unknowns is:
\begin{equation*}
  \begin{bmatrix} \ln h_i^+ \\ r_i \end{bmatrix}
  =
  \begin{bmatrix}
    X_i \\ R_i
  \end{bmatrix}
  \begin{bmatrix} \ln h_i^0 \\ \psi_i \\ V_i \end{bmatrix}.
\end{equation*}

Since $X_i$ has rank $(J-1)K$, we need $2K$ additional full-rank restrictions
for the augmented matrix to have full column rank $(J+1)K$.
When this rank condition is satisfied, the system has a unique solution
and $h_i^0$, $\psi_i$, and $V_i$ are identified.
\end{proof}

\begin{proof}[Proof of Theorem~\ref{thm:ident:u}]
In light of the linear representation in \eqref{eq:v:linear}, we have
\begin{equation*}
  u_i = \Xi_i(Q) V_i - L_i C_i(\sigma_i),
\end{equation*}
where $\Xi_i(Q)$ is the matrix function defined in \eqref{eq:Xi_i}.

Under the maintained assumptions,
$V_i$, $\psi_i$, and $h_i$ can be identified for each player
by Theorem~\ref{thm:ident:v:psi:h0}.
Since all choice-specific hazards $h_{ijk}$ are identified for all
choices (including $j = 0$),
recalling that $h_{ijk} = \lambda_{ik} \sigma_{ijk}$,
the choice probabilities are also identified:
$\sigma_{ijk} = h_{ijk} / \sum_{j'=0}^{J-1} h_{ij'k}$.
Therefore, all quantities on the right-hand side of the equation
for $u_i$ are known, and $u_i$ can be obtained directly.
\end{proof}

\bibliographystyle{chicago}
\bibliography{ctgames}

\end{document}